\begin{document}
\newtheorem{thm}{\textbf{Theorem}}
\newtheorem{thm1}{Theorem}
\newtheorem{thm2}{Theorem}
\newtheorem{thm3}{Theorem}
\newtheorem{lemma}[thm1]{\textbf{Lemma}}
\newtheorem{corollary}[thm2]{\textbf{Corollary}}
\newtheorem{proposition}[thm3]{\textbf{Proposition}}
\newtheorem{definition}{\textbf{Definition}}
\newtheorem{remark}{\textbf{Remark}}
\newtheorem{assumption}{\textbf{Assumption}}
\newtheorem{question}{\textbf{Question}}
\newtheorem{answer}{\textbf{Answer}}

%
\title{Optimal Antenna Allocation in MIMO Radars with Collocated Antennas}

\author{A.A. Gorji$^{a}$, R. Tharmarasa$^{a}$, W.D. Blair$^{b}$ and T. Kirubarajan$^{a}$\\
$^a$Department of Electrical and Computer Engineering, McMaster University, Hamilton, Ontario, Canada \\
$^b$Georgia Tech Research Institute, GA, USA}



%


\maketitle

\begin{abstract}
This paper concerns with the sensor management problem in collocated Multiple-Input Multiple-Output (MIMO) radars. After deriving the Cramer-Rao Lower Bound (CRLB) as a performance measure, the antenna allocation problem is formulated as a standard Semi-definite Programming (SDP) for the single-target case. In addition, for multiple unresolved target scenarios, a sampling-based algorithm is proposed to deal with the non-convexity of the cost function. Simulations confirm the superiority of the localization results under the optimal structure.
\end{abstract}
\noindent
{\bf Keywords: Collocated MIMO radars, location CRLB, antenna allocation.}

%
\section{Introduction}
Multiple-Input Multiple-Output (MIMO) radars with collocated antennas have been introduced recently in the literature \cite{Li2007}\cite{Li2009} as an alternative to the traditional phased-array radar systems \cite{SkolniK2002}. Unlike the conventional phased-array radar systems in which the transmitted signals are fully-coherent, MIMO radars enjoy the orthogonality of transmitted waveforms. The orthogonality of the transmitted signals provides a number of benefits for the MIMO radars, such as the diversity in the paths \cite{Fishler2006}, virtual aperture extension \cite{Bekkerman2006}, beam pattern improvement \cite{Bekkerman2006}, and higher probability of detection \cite{Bekkerman2006}, over the phased-array systems. Consequently, this has generated much interest among researchers to analyze different aspects of collocated MIMO radars such as waveform selection \cite{Fuhrmann2008}\cite{He2011}\cite{Li2008}\cite{Stoica2008}, range compression \cite{Li2008} and the applications of collocated MIMO radars in target detection, localization and tracking \cite{Gorji2011}\cite{Roberts2012}\cite{Xu2008}.

The location CRLB was recently proposed in the literature as a tool to evaluate the localization performance of collocated MIMO radars \cite{Bekkerman2006}\cite{Gorji2011}\cite{Li2008}. The Direction-of-Arrival (DOA) of the target was defined as the parameter of the problem in \cite{Bekkerman2006}. Then, the CRLB was derived according to the received complex signals. When multiple targets fall inside the same resolution cell of the MIMO radar, the CRLB might be also affected accordingly. An alternate form of the CRLB was then derived in \cite{Li2008} and the effect of the number of targets occupying the same cell might affect on the CRLB was analyzed. While previous works only derived the CRLB for DOA estimation, it was shown in \cite{Gorji2011} that the range information of the target can be also included in the received measurements. Therefore, a novel measurement model was proposed in \cite{Gorji2011} and the CRLB was found for both range and DOA of the target. It was also shown that the CRLB is affected by the number and locations of targets falling inside the same resolution cell.

Antenna allocation is a critical concern in MIMO array systems. An optimal antenna placement algorithm, where an array of closely-spaced antennas received the Time-of-Arrival (TOA) data, was proposed in \cite{Bishop2010}. It is also shown in \cite{Tharmarasa2007} that the Posterior CRLB (PCRLB) \cite{Tichavsky1998} can be used to find the number and optimal locations of multiple sensors while there is no restriction on the closeness of inter-sensor distances. The CRLB was also employed in \cite{Godrich2010}\cite{He2010} for antenna placement in widely-separated MIMO radars. It was shown that the trace of the CRLB matrix can be written as a convex function of the location of antennas. Then, convex optimization techniques were applied in order to find the optimal placement of antennas. The CRLB was also used as a performance metric in \cite{Godrich2012} for the antenna selection in widely-separated MIMO radars where a subset of antennas has to be chosen out of a large number of antennas that are widely-separated in the surveillance region.

Recently, there has been interest in applying optimization techniques to different aspects of collocated MIMO radars as well. The CRLB was employed in \cite{Li2008} to find the optimal cross-correlation matrix of the transmitted signals where it was shown that the CRLB is a convex function of the cross-correlation matrix. A Gradiant-based approach was also formulated in \cite{Fuhrmann2008} for beam-pattern synthesis by optimizing the transmitters' cross-correlation matrix. Although several other efforts have been made on the waveform optimization in collocated MIMO radars \cite{Chen2008}\cite{Chen2009}, the problem of antenna allocation in collocated MIMO radars has not been addressed so far. In \cite{He2010}, an algorithm for antenna selection in collocated MIMO radars was presented. Nevertheless, the proposed technique does not provide any systematic procedure for distributing the antennas in the surveillance region when the optimal set is chosen. It was demonstrated in \cite{Gorji2011} that the CRLB of a collocated MIMO radar is a function of the location of antennas. The simulations in \cite{Gorji2011} also showed that the localization performance is affected by the distribution of antennas in the surveillance region. Consequently, it is of great interest to find an optimal distribution of antennas that provide the best localization performance.

In this paper, the antenna allocation problem for collocated MIMO radar systems is addressed and a systematic approach is proposed based on the CRLB. To the best of our knowledge, there is no comprehensive work on the design and analysis of an optimal antenna placement framework for collocated MIMO radars. The main contributions of this paper are as follows:
\begin{itemize}
\item A novel CRLB derivation for MIMO radars with collocated antennas:\\
Although the CRLB was derived in the literature for the collocated MIMO radars \cite{Bekkerman2006}\cite{Li2008}, the effect of the range information was not considered in the CRLB derivation. In addition, there is no compact CRLB derivation in terms of the location of antennas. In this paper, the CRLB is first derived for a collocated MIMO radar where both DOA and range of the target are both embedded in the signal model. Also, the impact of the situation in which multiple targets fall inside the same resolution cell is taken into consideration.
\item A convex optimization approach for the single-target case:\\
It is shown that the antenna allocation problem can be dealt with by optimizing the location CRLB. To do this, the cost function is defined by applying suitable operators (e.g., determinant, trace, or maximum eigenvalue) to the CRLB. When a single target is located inside the resolution cell, the optimization algorithm is simplified to the well-known Semi-definite Programming (SDP) using the related convex relaxation techniques.
\item An optimization algorithm for the multiple unresolved target case:\\
When multiple targets fall inside the same resolution cell, it is observed that the cost function is not convex anymore. In this case, due to the presence of sinusoidal terms in each entry of the Fisher-Information-Matrix (FIM), the cost function cannot be simplified into a convex form. Therefore, a sampling-based approach is proposed where initial conditions of the optimization algorithm are generated such that the algorithm moves towards the global minimum. Simulation results also confirm the efficacy of the proposed method in finding the optimum antenna allocation when multiple targets fall in the same or consecutive resolution cells.
\end{itemize}

The rest of this paper is organized as follows. Section II presents a brief overview of MIMO radars with collocated antennas. The CRLB is derived for the MIMO system in Section III. Section IV deals with the antenna allocation problem where the convex optimization framework for the single-target case is described. Simulation results are given in Section V. The paper is concluded in Section V.
\subsection{Notations}\label{Notation}
The notations used in this paper are as follows:
\begin{itemize}
\item $A=\mathcal{D}(\textbf{a})$: a diagonal matrix with $A_{ii}=\textbf{a}_i$ and $A_{ij}=0, \ i \neq j$
\item $\Re(a)$: the real part of the complex variable $a$
\item $\Im(a)$: the imaginary part of the complex variable $a$
\item $\mathcal{N}(\mu,\Sigma)$: a Gaussian function with mean $\mu$ and the covariance matrix $\Sigma$
\item $\mathcal{T}(A)$: the trace operator
\item $A^H$: the Hermition transpose
\item $A(:,i)$: the $i$-th column of matrix $A$
\end{itemize}
\section{MIMO Radars with collocated Antennas}
Consider an array of antennas with $M$ transmitters and $N$ receivers.
\begin{definition}\label{Location}
Define $\textbf{s}_{ti}=[x_{ti} \ y_{ti}]'$ and $\textbf{s}_{rj}=[x_{rj} \ y_{rj}]'$ as the location of the $i$-th transmitter and the $j$-th receiver in a 2-dimensional surveillance region, respectively.
\end{definition}
\begin{assumption}\label{Targets}
There are $T$ targets in the region where $\textbf{x}_t=[x_t \ y_t]'$ denotes the location of the $t$-th target.\footnote{Note that 3-D MIMO radars, although not very common in the literature, can be handled within our framework.} Also, the reflection of each target is modeled by a complex random variable $\alpha_t=\xi_t+j\zeta_t$ with $\xi$ and $\zeta$ being the real and imaginary parts of $\alpha$, respectively.
\end{assumption}
\begin{assumption}
It is assumed that the target's reflection follows a Swerling type I model \cite{Swerling1997} where $\{\xi_t\sim \mathcal{N}(\bar{\xi}_t,\sigma^2_{\alpha})\}$ and
$\{\zeta_t \sim \mathcal{N}(\bar{\zeta}_t,\sigma^2_{\alpha})\}$. Other models can be handled accordingly.
\end{assumption}
\begin{assumption}\label{CenterMass}
It is assumed that the distance between any two antennas is much smaller than the distance of the array to each target. It is also assumed that the arrays of transmitters and receivers are both collocated with the origin as the center of the mass of the array.
\end{assumption}
\begin{definition}
Define $\textbf{h}[k]=[h_1[k] \ \cdots \ h_M[k]]^H$ as the transmitted waveform in the $k$-th snapshot with $K$ being the number of total snapshots.
\end{definition}
\subsection{Signal Model}
Considering a collocated structure, resolution cells can be defined as a set of concentric circles where the radius of the $c$-th circle equals $c r_{bin}$ with $r_{bin}$ denoting the resolution width. Figure \ref{ResolutionCells} shows a simple configuration of resolution cells as well as the antennas that are distributed uniformly. Note that the target is located inside the $c$-th cell.
\begin{figure}
\centering
\includegraphics[width=8cm]{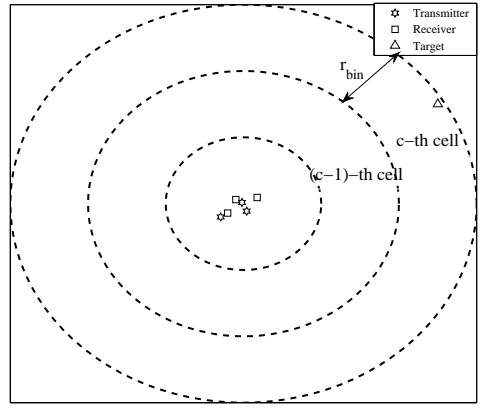}
\caption{A collocated MIMO radar with three transmitters and three receivers. The target is located inside the $c$-th cell. The resolution cells are shown as concentric circles with $cr_{bin}$ being the radius of the $c$-th cell.}
\label{ResolutionCells}
\end{figure}
\begin{assumption}
It is assumed that all $T$ targets are distributed in $C$ consecutive cells (e.g., $(c^*+1)$ to $(c^*+C)$) where $n_c$ denotes the number of targets inside the $c$-th cell. Without loss of generality, it is assumed that $c^*=0$.
\end{assumption}
\begin{assumption}
Transmitters send orthogonal signals with a diagonal cross-correlation matrix being defined as
\begin{equation}
R=\frac{1}{K}\sum_{k=1}^K \textbf{h}[k]\textbf{h}^H[k]=\mathcal{D}\left([P_1 \ \cdots \ P_M]'\right)
\end{equation}
where $P_m$ denotes the total transmitted power by the $m$-th antenna.
\end{assumption}
\begin{definition}
Defining $r^c_t=||\textbf{x}^c_t||_2$ as the Euclidean distance of the $t$-th target in the $c$-th cell to the origin, the ratio parameter $\beta^c_t$ is defined as follows:
\begin{equation}\label{Ratio}
\beta^c_t=\frac{r^c_t+(1-c)r_{bin}}{r_{bin}}
\end{equation}
\end{definition}
Now, given the above assumptions, the received output of the matched filter in the $c$-th resolution cell can be written as follows \cite{Gorji2011}:
\begin{equation}\label{SignalModel}
\eta_c=\left\{\begin{array}{cc}
\sum_{t=1}^{n_{c+1}}(1-\beta^{c+1}_{t})\phi^{c+1}_{t} & c=0\\
\sum_{t=1}^{n_{c}}\beta^{c}_{t}\phi^{c}_{t} & c=C\\
\sum_{t_1=1}^{n_c}\beta^c_{t_1}\phi^c_{t_1}+\sum_{t_2=1}^{n_{c+1}}(1-\beta^{c+1}_{t_2})\phi^{c+1}_{t_2} & \mbox{otherwise}
\end{array}
\right.+w
\end{equation}
where $w$ denotes a complex Gaussian noise with independent real and imaginary parts being distributed as $\{\Re(w),\Im(w)\} \sim \mathcal{N}(0,\sigma^2_w)$, and $\phi^c_t$ is the contribution of the $t$-th target in the signal received in the $c$-th cell, which is written as $\phi^c_t=\alpha^c_t \psi^c_t$ with the following form for the unknown term on the right-hand side of the equality \cite{Gorji2011}:
\begin{eqnarray}\label{TargetContribution}
\psi^c_t=\sqrt{K}\mbox{VEC}(A^c_t R^{\frac{1}{2}})
\end{eqnarray}
Here, $\mbox{VEC}(A)$ stands for the matrix vectorization operator, and $A^c_t$ denotes the steering matrix of the $t$-th target defined as follows \cite{Li2008}:
\begin{eqnarray}
A^c_t &=& \textbf{b}^c_t (\textbf{a}^c_t)^H\\
\textbf{a}^c_t &=& \exp\left(-j\frac{2\pi}{\lambda}[\sin(\theta^c_t) \ \cos(\theta^c_t)]S_{t}\right)\\
\textbf{b}^c_t &=& \exp\left(-j\frac{2\pi}{\lambda}[\sin(\theta^c_t) \ \cos(\theta^c_t)]S_{r}\right)
\end{eqnarray}
where $\lambda$ is the wavelength, $\theta^c_t$ denotes the DOA of the $t$-th target with respect to the origin, and the matrices $S_t$ and $S_r$ are defined as
\begin{eqnarray}
S_t &=& \left[\textbf{s}_{t1} \ \cdots \ \textbf{s}_{tM}\right]\\
S_r &=& \left[\textbf{s}_{r1} \ \cdots \ \textbf{s}_{rN}\right]
\end{eqnarray}
\begin{definition}\label{RealImagForm}
Given the vector of the output of the matched-filter as $\eta=[\eta^*_{1} \ \cdots \ \eta^*_{C}]^H$, define $\rho=[\Re(\eta_1) \ \Im(\eta_1) \ \cdots \ \Re(\eta_C) \ \Im(\eta_C)]'$.
\end{definition}
 Now, the mean received output of the matched filter is defined as $\bar{\rho}=[\Re(\bar{\eta}_1) \ \Im(\bar{\eta}_2) \ \cdots \ \Re(\bar{\eta}_C) \ \Im(\bar{\eta}_C)]'$.
The unknown terms $\Re(\bar{\eta}_c)$ and $\Im(\bar{\eta}_c)$ can be found by calculating $\Re(\bar{\phi}^c_t)$ and $\Im(\bar{\phi}^c_t)$ as follows and then replacing in (\ref{SignalModel}), respectively:
 \begin{eqnarray}\label{Average}
 \Re(\bar{\phi}^c_t) &=&  \bar{\xi}^{c}_t\Re(\psi^c_t)-\bar{\zeta}^{c}_t\Im(\psi^c_t)\nonumber\\
 \Im(\bar{\phi}^c_t) &=&  \bar{\xi}^{c}_t\Re(\psi^c_t)+\bar{\zeta}^{c}_t\Im(\psi^c_t)
 \end{eqnarray}
where the unknown terms on the right-hand side of the above equation can be written as follows:
\begin{eqnarray}\label{Psi}
\Re(\psi^c_t) &=& \sqrt{K}\cos\left(\frac{2\pi}{\lambda}[\sin(\theta^c_t) \ \cos(\theta^c_t)]\Omega(S_t,S_r,R)\right)\nonumber \\
\Im(\psi^c_t) &=& \sqrt{K}\sin\left(\frac{2\pi}{\lambda}[\sin(\theta^c_t) \ \cos(\theta^c_t)]\Omega(S_t,S_r,R)\right)
\end{eqnarray}
with $\Omega$ being defined as
\begin{equation}\label{Omega}
\Omega(S_t,S_r,R)=\left(1_{1\times M}\odot S_r-S_t \odot 1_{1\times N}\right)\left(R^{\frac{1}{2}} \odot 1_{1\times N}\right)
\end{equation}
where $\odot$ is the Kronecker product, and $1_{a\times b}$ stands for a $a\times b$ matrix with all entries being equal to one.

Given the signal model in (\ref{SignalModel}) and the mean output of the matched-filter in (\ref{Average}), the following proposition provides the distribution of the output of the matched-filter \cite{Gorji2011}:
\begin{proposition}
In a scenario with $T$ targets located in $C$ neighboring cells, the output of the matched-filter received by a collocated MIMO radar with $M$ transmitters and $N$ receivers (e.g., $\rho$) is Gaussian distributed with mean $\bar{\rho}$ and covariance $\Sigma$ defined as follows:
\begin{equation}\label{Covariance}
\Sigma=\left(\begin{array}{cccccc}
\Sigma_{11} & \Sigma_{12} & 0 & \cdots & 0\\
\Sigma_{21} & \Sigma_{22} & 0 & \cdots & 0\\
0 & 0 & \ddots & \cdots & 0\\
0 & 0 & \cdots & \Sigma_{(C-1)(C-1)} & \Sigma_{(C-1)C}\\
0 & 0 & \cdots & \Sigma_{C(C-1)} & \Sigma_{CC}
\end{array}
\right)
\end{equation}
with the following definitions for $\Sigma_{cc}$ and $\Sigma_{c(c-1)}$ terms:
\begin{equation}\label{PropForm}
\Sigma_{cc}=\left\{\begin{array}{cc}
K\sigma^2_{\alpha}\left(\sum_{t=1}^{n_1}(1-\beta^1_t)^2+\sigma^2_w\right)I_{2MN} & c=0 \nonumber \\
K\sigma^2_{\alpha}\left(\sum_{t=1}^{n_{C}}(\beta^C_t)^2+\sigma^2_w\right)I_{2MN} & c=C \nonumber \\
K\sigma^2_{\alpha}(\sum_{t_1=1}^{n_c}(\beta^c_{t_1})^2+ & \mbox{otherwise} \nonumber\\
\sum_{t_2=1}^{n_{c+1}}(1-\beta^c_{t_2})^2+\sigma^2_w)I_{2MN} & \nonumber\\
\end{array}
\right.
\end{equation}
\begin{equation}
\Sigma_{c(c-1)}=K\sigma^2_{\alpha}\sum_{t=1}^{n_c}(1-\beta^c_t)\beta^c_tI_{2MN}
\end{equation}
\end{proposition}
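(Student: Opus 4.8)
The plan is to treat $\rho$ as an affine image of the underlying Gaussian randomness and then read off its two moments. By the signal model (\ref{SignalModel}) each $\eta_c$ is a fixed linear combination of the target contributions $\phi^c_t=\alpha^c_t\psi^c_t$ and the additive noise $w$; since the reflections $\alpha^c_t=\xi^c_t+j\zeta^c_t$ are Gaussian by the Swerling-I assumption, $w$ is complex Gaussian, and the steering/vectorization quantities $\psi^c_t$ of (\ref{Psi})--(\ref{Omega}) are deterministic, the stacked vector $\rho$ of Definition \ref{RealImagForm} is a deterministic affine map applied to the jointly Gaussian collection $\{\xi^c_t,\zeta^c_t\}$ together with $\{\Re(w),\Im(w)\}$. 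Affine images of jointly Gaussian vectors are Gaussian, so $\rho\sim\mathcal{N}(\bar\rho,\Sigma)$ and only the two moments remain to be identified. The mean requires no real work: taking expectations in (\ref{SignalModel}) and substituting $E[\xi^c_t]=\bar\xi^c_t$, $E[\zeta^c_t]=\bar\zeta^c_t$ and $E[w]=0$ reproduces exactly the quantities $\Re(\bar\phi^c_t),\Im(\bar\phi^c_t)$ of (\ref{Average}), hence $\bar\rho$ as defined.

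For the covariance I would work blockwise and lean on the spillover bookkeeping implicit in (\ref{SignalModel}): a target $t$ lying in cell $c$ feeds $\eta_c$ with weight $\beta^c_t$ and feeds the neighbouring output with weight $(1-\beta^c_t)$, and it reaches no other cell. For a diagonal block $\Sigma_{cc}$ I collect the variances of every contributor to $\eta_c$, i.e.\ the cell-$c$ targets weighted by $(\beta^c_t)^2$ and the adjacent-cell targets weighted by $(1-\beta^{c+1}_t)^2$, together with the noise variance; independence of distinct targets and of $\xi^c_t$ from $\zeta^c_t$ lets these contributions add. For the adjacent block $\Sigma_{c(c-1)}$ the only random quantities shared by the two outputs are the cell-$c$ targets, each entering with weight $\beta^c_t$ in one and $(1-\beta^c_t)$ in the other, so their covariances accumulate the product $\beta^c_t(1-\beta^c_t)$, matching (\ref{PropForm}). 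Finally, two outputs whose cell indices differ by at least two share no target, and since the per-cell noises are independent their cross-block vanishes; this is precisely the block-tridiagonal pattern of (\ref{Covariance}).

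The hard part will be reconciling these block values with the claimed isotropic form $(\,\cdot\,)I_{2MN}$. Writing the stacked real/imaginary contribution of a single target as $G^c_t\,[\xi^c_t\ \zeta^c_t]'$ with $G^c_t=\left[\begin{array}{cc}\Re(\psi^c_t) & -\Im(\psi^c_t)\\ \Im(\psi^c_t) & \Re(\psi^c_t)\end{array}\right]$ read off from (\ref{Average}), the per-target reflection covariance is $\sigma^2_\alpha\,G^c_t (G^c_t)'$; the independence and common variance $\sigma^2_\alpha$ of $\xi^c_t,\zeta^c_t$ make this matrix symmetric, kill the real--imaginary cross terms, and, through the unit-modulus steering entries of (\ref{Psi}) (a consequence of the waveform orthogonality that renders $R$ diagonal), fix its diagonal at $K\sigma^2_\alpha$.

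However, $G^c_t (G^c_t)'$ is intrinsically rank two, so on its own it cannot equal the full-rank $K\sigma^2_\alpha I_{2MN}$; the isotropy asserted in (\ref{PropForm}) must therefore be justified by an extra step, either the genuinely isotropic noise term $\sigma^2_w I_{2MN}$ absorbing the off-diagonal signal correlations, or a per-entry fluctuation model under which distinct matched-filter outputs are treated as uncorrelated. Pinning down which of these is in force, and checking that the inter-entry correlations indeed drop out, is the step I expect to demand the most care. Once the isotropic normalisation of each elementary covariance is secured, the blockwise accounting of the second paragraph assembles immediately into (\ref{Covariance})--(\ref{PropForm}).
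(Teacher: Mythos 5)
Your overall strategy --- Gaussianity of $\rho$ as an affine image of the jointly Gaussian reflections and noise, the mean by direct expectation, and blockwise second-moment bookkeeping giving the tridiagonal pattern with weights $(\beta^c_t)^2$, $(1-\beta^{c+1}_t)^2$ and $\beta^c_t(1-\beta^c_t)$ --- is sound and is the natural route. For calibration, note that the paper itself contains no proof of this proposition: it is imported wholesale from the earlier reference \cite{Gorji2011}, so there is no in-paper argument to compare against, and your attempt has to stand on its own.

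It does not quite stand, and the failure point is exactly the step you flag and leave open: the isotropic form $(\cdot)\,I_{2MN}$. Of the two rescues you propose, the first is impossible. The additive noise contributes the diagonal matrix $\sigma^2_w I_{2MN}$, and adding a diagonal matrix can never cancel the nonzero off-diagonal entries $K\sigma^2_\alpha\cos(\omega_l-\omega_{l'})$ (and the corresponding $\sin$ terms) produced by $\sigma^2_\alpha G^c_t (G^c_t)'$. Under the paper's stated Assumptions 1 and 2 --- a single complex scalar $\alpha^c_t$ per target multiplying the whole vector $\psi^c_t$ --- each target's contribution to a diagonal block is rank two, so the proposition as written does not follow from the stated model and is in fact false for it in general. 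What makes the claimed form true is your second alternative: the fluctuations must be taken independent across the $MN$ transmit--receive paths (equivalently, per matched-filter output). Then each path contributes an exact $2\times 2$ block $K\sigma^2_\alpha I_2$, since the real--imaginary cross term $\Re(\psi_l)\Im(\psi_l)-\Im(\psi_l)\Re(\psi_l)$ vanishes and $|\psi_l|^2=K$ for unit powers, and these blocks assemble into $K\sigma^2_\alpha(\cdot)I_{2MN}$ together with the $\beta$-weighted sums and the cross blocks $K\sigma^2_\alpha\sum_t\beta^c_t(1-\beta^c_t)I_{2MN}$. This decorrelation is an additional modeling assumption that cannot be derived from the assumptions given; to close your proof you must import it explicitly (as the cited prior work effectively does) or weaken the statement. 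One further discrepancy worth recording: your bookkeeping yields $K\sigma^2_\alpha\sum_t(\cdot)^2+\sigma^2_w$ on the diagonal, whereas the displayed formula has $K\sigma^2_\alpha\left(\sum_t(\cdot)^2+\sigma^2_w\right)$, i.e., the noise variance scaled by $K\sigma^2_\alpha$; your version is the dimensionally consistent one, and the paper's is best read as a typo.
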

\section{Cramer-Rao Lower Bound}
The CRLB provides the best Minimum Mean Squared Error (MMSE) bound for any unbiased estimator \cite{BarShalom2001}. In this section, the CRLB is derived for a collocated MIMO radar. It is shown that the CRLB is a function of the distances between any two antennas. Also, a scenario is considered with $T$ targets distributed in $C$ consecutive cells where different number of targets might be located inside each cell.
\begin{definition}
For the $t$-th target located in the $c$-th resolution cell, define the state and parameter vector $X^c_t$ and $\Theta^c_t$, respectively, as follows:
\begin{eqnarray}
X^c_t &=& [x^c_t \ y^c_t \ \bar{\xi}^{c}_t \ \bar{\zeta}^{c}_t]' \\
\Theta^c_t &=& [\theta^c_t \ \beta^c_t \ \bar{\xi}^{c}_t \ \bar{\zeta}^{c}_t]'
\end{eqnarray}
\end{definition}
The CRLB is the inverse of the Fisher-Information-Matrix (FIM) defined as follows \cite{BarShalom2001}:
\begin{definition}
Assuming $\textbf{y}$ as the received noisy measurements and $\theta$ as the parameters of the measurement model, define the following matrix operator:
\begin{equation}\label{FIM}
J_{\theta\theta'} = E_{\textbf{y}}\left[\frac{\partial \log p(\textbf{y}|\theta)}{\partial \theta}\left(\frac{\partial \log p(\textbf{y}|\theta)}{\partial \theta}\right)'\right]
\end{equation}
\end{definition}
Refer to the definition of $\rho$ and its distribution provided by Proposition 1 and define the stacked state and parameter vector of all targets as $X=\left[(X^1_1)' \ \cdots \ (X^1_{n_1})' \ (X^2_{1})' \ \cdots \ (X^C_{n_C})'\right]'$ and $\Theta=\left[(\Theta^1_1)' \ \cdots \ (\Theta^1_{n_1})' \ (\Theta^2_{1})' \ \cdots \ (\Theta^C_{n_C})'\right]'$, respectively. In this case, the FIM is given by $J_{XX'}$. The defined FIM can be now written in the following form:
\begin{equation}
J_{XX'}=E_{\rho}\left[\frac{\partial \log p(\rho|X)}{\partial X}\left(\frac{\partial \log p(\rho|X)}{\partial X}\right)'\right]
\end{equation}
Using the chain-rule for partial derivatives, the above FIM can be simplified to the following form \cite{Gorji2011}:
\begin{equation}\label{FIM2}
J_{XX'}=\Gamma J_{\Theta\Theta'} \Gamma'
\end{equation}
Here, $\Gamma$ is called the system matrix and is written as
\begin{equation}\label{SystemMatrix}
\Gamma=\left[\begin{array}{cccc}
\gamma^1_1 & 0_{4\times4} & \cdots & 0_{4\times4}\\
0_{4\times4} & \gamma^1_2 & \cdots & 0_{4\times4}\\
\vdots & \vdots & \ddots & \vdots\\
0_{4\times4} & \cdots & 0_{4\times4} & \gamma^C_{n_C}
\end{array}
\right]
\end{equation}
 with $0_{4\times4}$ as a $4 \times 4$ zero-matrix, and individual $\gamma^c_t$ terms being defined as
\begin{equation}
\gamma^c_t=\left[\begin{array}{cccc}
\frac{\partial \theta^c_t}{\partial x^c_t} & \frac{\partial \beta^c_t}{\partial x^c_t} & 0 & 0\\
\frac{\partial \theta^c_t}{\partial y^c_t} & \frac{\partial \beta^c_t}{\partial y^c_t} & 0 & 0\\
0 & 0 & 1 & 0\\
0 & 0 & 0 & 1
\end{array}
\right]
\end{equation}
where the unknown partial derivatives can be derived using the definition of the ratio in (\ref{Ratio}) and the following equation for the target DOA:
\begin{equation}
\theta^c_t=\tan^{-1}\left(\frac{y^c_t}{x^c_t}\right)
\end{equation}
Now, the FIM derivation becomes finding the unknown term $J_{\Theta\Theta'}$ in (\ref{FIM2}). The new $J_{\Theta\Theta'}$ can be broken into the following sub-matrices:
\begin{equation}\label{NewForm}
J_{\Theta \Theta'}=\left[\begin{array}{cccccc}
J_{(\Theta^1)(\Theta^1)'} & J_{(\Theta^1)(\Theta^2)'} & 0 & 0 & \cdots & 0\\
J_{(\Theta^2)(\Theta^1)'} & J_{(\Theta^2)(\Theta^2)'} & J_{(\Theta^2)(\Theta^3)'} & 0 & \cdots & 0\\
0 & J_{(\Theta^3)(\Theta^2)'} & J_{(\Theta^3)(\Theta^3)'} & J_{(\Theta^3)(\Theta^4)'} & \cdots & 0\\
\vdots & \vdots & \vdots & \ddots & \ddots & \vdots\\
0 & 0 & \cdots & J_{(\Theta^{C-1})(\Theta^{C-2})'} & J_{(\Theta^{C-1})(\Theta^{C-1})'} & J_{(\Theta^{C-1})(\Theta^C)'}\\
0 & 0 & \cdots & 0 & J_{(\Theta^C)(\Theta^{C-1})'} & J_{(\Theta^C)(\Theta^C)'}
\end{array}
\right]
\end{equation}
Here, $\Theta^c$ denotes a $4 \times n_c$ vector formed by stacking the parameters of those targets falling inside the $c$-th cell. The following equation can be written for $\Theta^c$:
\begin{equation}
\Theta^c=[(\Theta^c_1)' \ \cdots \ (\Theta^c_{n_c})']'
\end{equation}
Each individual entry in (\ref{NewForm}) can be also written as follows:
\begin{eqnarray}\label{NewForm3}
J_{(\Theta^{c_1})(\Theta^{c_2})'}=\left[\begin{array}{ccc}
J_{(\Theta^{c_1}_1)(\Theta^{c_2}_1)'} & \cdots &  J_{(\Theta^{c_1}_1)(\Theta^{c_2}_{n_{c_2}})'}\\
\vdots & \ddots & \vdots \\
J_{(\Theta^{c_1}_{n_{c_1}})(\Theta^{c_2}_{n_{c_2}})'} & \cdots & J_{(\Theta^{c_1}_{n_{c_1}})(\Theta^{c_2}_{n_{c_2}})'}
\end{array}\right], \nonumber\\
c_1 \in \{c_2,c_2+1\}
\end{eqnarray}
Finally, each entry of the FIM in (\ref{NewForm3}) can be simplified into the following form:
\begin{equation}\label{Simplest}
J_{(\Theta^{c_1}_n)(\Theta^{c_2}_{m})'}=\left[\begin{array}{cccc}
J_{\theta^{c_1}_n \theta^{c_2}_m} & J_{\theta^{c_1}_n \beta^{c_2}_m} & J_{\theta^{c_1}_n \bar{\xi}^{c_2}_m} & J_{\theta^{c_1}_n \bar{\zeta}^{c_2}_m}\\
J_{\beta^{c_1}_n \theta^{c_2}_m} & J_{\beta^{c_1}_n \beta^{c_2}_m} & J_{\beta^{c_1}_n \bar{\xi}^{c_2}_m} & J_{\beta^{c_1}_n \bar{\zeta}^{c_2}_m}\\
J_{\bar{\xi}^{c_1}_n \theta^{c_2}_m} & J_{\bar{\xi}^{c_1}_n \beta^{c_2}_m} & J_{\bar{\xi}^{c_1}_n \bar{\xi}^{c_2}_m} & J_{\bar{\xi}^{c_1}_n \bar{\zeta}^{c_2}_m}\\
J_{\bar{\zeta}^{c_1}_n \theta^{c_2}_m} & J_{\bar{\zeta}^{c_1}_n\beta^{c_2}_m} & J_{\bar{\zeta}^{c_1}_n \bar{\xi}^{c_2}_m} & J_{\bar{\zeta}^{c_1}_n \bar{\xi}^{c_2}_m}\\
\end{array}
\right]
\end{equation}
Note that the matrix given by (\ref{Simplest}) is a $4 \times 4$ FIM sub-matrix that includes the information correlation between the parameters of the $n$-th target in the $c_1$-th cell and the $m$-th target in the $c_2$-th cell. Before presenting the algebraic expressions for each entry of the FIM given in (\ref{Simplest}), the following new notations are defined:
\begin{definition}\label{Notation1}
Assuming that the $n$-th target is located in the $c_1$-th cell, define the following new notations:
\begin{eqnarray}
\omega^{c_1}_n(l) &=& \frac{2\pi}{\lambda}\left[\sin(\theta^{c_1}_n) \ \cos(\theta^{c_1}_n)\right]\Omega(:,l)\\
\textbf{p}^{c_1}_n &=& \left[\cos(\theta^{c_1}_n) \ -\sin(\theta^{c_1}_n)\right]'\\
\breve{\beta}^{c_1}_n &=& \left[(1-\beta^{c_1}_n) \ \beta^{c_1}_n\right]'
\end{eqnarray}
where $\Omega(:,l)$ denotes the $l$-th column of matrix $\Omega$ with $\Omega(S_t,S_r,R)$ being written as $\Omega$ for brevity.
\end{definition}
\begin{definition}\label{Notation2}
For any two targets falling inside cells $c_1$ and $c_2$, respectively, the following notations are defined:
\begin{eqnarray}
\kappa^{nm}_{c_1c_2} &=& \bar{\xi}^{c_1}_n\bar{\xi}^{c_2}_m+\bar{\zeta}^{c_1}_n\bar{\zeta}^{c_2}_m \\
\iota^{nm}_{c_1c_2}&=& \bar{\xi}^{c_1}_n\bar{\zeta}^{c_2}_m-\bar{\zeta}^{c_1}_n\bar{\xi}^{c_2}_m
\end{eqnarray}
\end{definition}
The covariance matrix $\Sigma$ found in (\ref{Covariance}) can be now rewritten for cells $\{c_1-1,c_1,c_2\}$ with $c_1 \in \{c_2,c_2-1\}$. Using the general form given by (\ref{Covariance}) and expressions provided by (\ref{PropForm}), the new covariance matrix can be written as
\begin{equation}\label{sigmastar}
\Sigma_*=\left[\begin{array}{ccc}
c_1 & c_4 & 0\\
c_4 & c_2 & c_5\\
0 & c_5 & c_3
\end{array}
\right]\odot I_{2MN}
\end{equation}
where $c_i$ terms are found using (\ref{PropForm}). Similarly, the new notation $\bar{\rho}_*$ is defined as
\begin{equation}\label{rostar}
\bar{\rho}_*=[\Re(\bar{\eta}_{c_1-1}) \ \Im(\bar{\eta}_{c_1-1}) \ \Re(\bar{\eta}_{c_1}) \ \Im(\bar{\eta}_{c_1}) \ \Re(\bar{\eta}_{c_2}) \ \Im(\bar{\eta}_{c_2})]'
\end{equation}
Now, it can be shown that the inverse of $\Sigma_*$ can be written in the following form:
\begin{equation}\label{InverseForm}
\Sigma^{-1}_*=\left[\begin{array}{ccc}
k_1 & k_4 & k_5\\
k_4 & k_2 & k_6\\
k_5 & k_6 & k_3
\end{array}
\right]\odot I_{2MN}
\end{equation}
The following proposition provides algebraic expressions for each entry of the FIM in (\ref{Simplest}):
\begin{proposition}\label{CRLBProp}
Assume a scenario with $T$ targets falling inside $C$ consecutive resolution cells. Each entry of the
FIM defined by (\ref{Simplest}) can be calculated as follows:
\begin{eqnarray}\label{FIMs}
J_{\theta^{c_1}_n\theta^{c_2}_m} &=& K\left(\frac{2\pi}{\lambda}\right)^2\Bigg[\sum_{l=1}^{MN}(\textbf{p}^{c_1}_n)'\Omega(:,l)\Omega'(:,l)\textbf{p}^{c_2}_m  \times \nonumber\\ &\times& \left(\kappa^{nm}_{c_1c_2}\times\cos(\omega^{c_2}_m(l)-\omega^{c_1}_n(l))+\iota^{nm}_{c_1c_2}\sin(\omega^{c_2}_m(l)-\omega^{c_1}_n(l))\right)\Bigg]C_{\theta^{c_1}_n\theta^{c_2}_m}\\
J_{\beta^{c_1}_n\beta^{c_2}_m} &=& \frac{K}{r^2_{bin}}\left[\sum_{l=1}^{MN}\kappa^{nm}_{c_1c_2}\cos(\omega^{c_2}_m(l)-\omega^{c_1}_n(l))+\iota^{nm}_{c_1c_2}\sin(\omega^{c_2}_m(l)-\omega^{c_1}_n(l))\right]C_{\beta^{c_1}_n\beta^{c_2}_m}+F(\beta^{c_1}_n,\beta^{c_2}_m)\\
J_{\bar{\xi}^{c_1}_n\bar{\xi}^{c_2}_m} &=& J_{\bar{\zeta}^{c_1}_n\bar{\zeta}^{c_2}_m}=K\left[\sum_{l=1}^{MN}\cos\left(\omega^{c_1}_n(l)-\omega^{c_2}_m(l)\right)\right]C_{\theta^{c_1}_n\theta^{c_2}_m}\\
J_{\theta^{c_1}_n\beta^{c_2}_m} &=& \frac{K}{r_{bin}}\frac{2\pi}{\lambda}\left[\sum_{l=1}^{MN}(\textbf{p}^{c_1}_n)'\Omega(:,l)\left\{
\kappa^{nm}_{c_1c_2}\cos(\omega^{c_2}_m(l)-\omega^{c_1}_n(l))+\iota^{nm}_{c_1c_2}\sin(\omega^{c_2}_m(l)-\omega^{c_1}_n(l))
\right\}\right]C_{\theta^{c_1}_n\beta^{c_2}_m}\\
J_{\theta^{c_1}_n\bar{\xi}^{c_2}_m} &=& K\frac{2\pi}{\lambda}\left[\sum_{l=1}^{MN}(\textbf{p}^{c_1}_n)'\Omega(:,l)\left\{
-\bar{\zeta}^{c_1}_n\cos(\omega^{c_2}_m(l)-\omega^{c_1}_n(l))+\bar{\xi}^{c_1}_n\sin(\omega^{c_2}_m(l)-\omega^{c_1}_n(l))
\right\}\right]C_{\theta^{c_1}_n\theta^{c_2}_m}\\
J_{\theta^{c_1}_n\bar{\zeta}^{c_2}_m} &=& K\frac{2\pi}{\lambda}\left[\sum_{l=1}^{MN}(\textbf{p}^{c_1}_n)'\Omega(:,l)\left\{
\bar{\xi}^{c_1}_n\cos(\omega^{c_2}_m(l)-\omega^{c_1}_n(l))+\bar{\zeta}^{c_1}_n\sin(\omega^{c_2}_m(l)-\omega^{c_1}_n(l))
\right\}\right]C_{\theta^{c_1}_n\theta^{c_2}_m}\\
J_{\beta^{c_1}_n\bar{\xi}^{c_2}_m} &=& \frac{K}{r_{bin}}\left[\sum_{l=1}^{MN}\left\{
\bar{\xi}^{c_1}_n\cos(\omega^{c_2}_m(l)-\omega^{c_1}_n(l))+\bar{\zeta}^{c_1}_n\sin(\omega^{c_2}_m(l)-\omega^{c_1}_n(l))
\right\}\right]C_{\beta^{c_1}_n\theta^{c_2}_m}\\
J_{\beta^{c_1}_n\bar{\zeta}^{c_2}_m} &=& \frac{K}{r_{bin}}\left[\sum_{l=1}^{MN}\left\{
\bar{\zeta}^{c_1}_n\cos(\omega^{c_2}_m(l)-\omega^{c_1}_n(l))-\bar{\xi}^{c_1}_n\sin(\omega^{c_2}_m(l)-\omega^{c_1}_n(l))
\right\}\right]C_{\beta^{c_1}_n\theta^{c_2}_m}\\
J_{\bar{\xi}^{c_1}_n\bar{\zeta}^{c_2}_m} &=& K\left[\sum_{l=1}^{MN}\sin(\omega^{c_1}_n(l)-\omega^{c_2}_m(l))
\right]C_{\theta^{c_1}_n\theta^{c_2}_m}
\end{eqnarray}
with $F(\beta^{c_1}_n,\beta^{c_2}_m)$ being a known function of ratios, and the following expression being given for unknown coefficients in the right-hand side of the above equations:
\begin{eqnarray}
C_{\theta^{c_1}_n\theta^{c_2}_m} &=& \left\{\begin{array}{cc}
\left[(\breve{\beta}^{c_1}_n)' \ 0\right]\Sigma^{-1}_*\left[0 \ (\breve{\beta}^{c_2}_m)'\right]' & c_1=c_2-1\\
\left[0 \ (\breve{\beta}^{c_1}_n)'\right]\Sigma^{-1}_*\left[0 \ (\breve{\beta}^{c_2}_m)'\right]' & \mbox{otherwise}
\end{array}
\right.\\
C_{\beta^{c_1}_n\beta^{c_2}_m} &=& \left\{\begin{array}{cc}
\left[-1 \ 1 \ 0\right]\Sigma^{-1}_*\left[0 \ -1 \ 1\right]' & c_1=c_2-1\\
\left[0 \ -1 \ 1\right]\Sigma^{-1}_*\left[0 \ -1 \ 1\right]' & \mbox{otherwise}
\end{array}
\right.\\
C_{\theta^{c_1}_n\beta^{c_n}_m} &=& \frac{\partial C_{\theta^{c_1}_n\theta^{c_2}_m}}{\partial \beta^{c_2}_m}, \ C_{\beta^{c_1}_n\theta^{c_2}_m} = \frac{\partial C_{\theta^{c_1}_n\theta^{c_2}_m}}{\partial \beta^{c_1}_n}
\end{eqnarray}
\end{proposition}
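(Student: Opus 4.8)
The plan is to obtain every entry of (\ref{Simplest}) from the Slepian--Bangs formula for the Fisher information of a Gaussian family. Since, by Proposition~1, $\rho$ is Gaussian with mean $\bar\rho$ and covariance $\Sigma$, and since inspection of (\ref{PropForm}) shows that $\Sigma$ depends on the parameters only through the ratios $\beta^c_t$ while the mean $\bar\rho$ (via (\ref{Average})) depends on $\theta^c_t$, $\bar\xi^c_t$ and $\bar\zeta^c_t$ as well, each entry of $J_{\Theta\Theta'}$ splits as $\frac{\partial\bar\rho'}{\partial a}\Sigma^{-1}\frac{\partial\bar\rho}{\partial b}+\frac12\mathcal{T}\!\left(\Sigma^{-1}\frac{\partial\Sigma}{\partial a}\Sigma^{-1}\frac{\partial\Sigma}{\partial b}\right)$. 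The trace term vanishes unless both $a$ and $b$ are ratios, so it contributes only to $J_{\beta^{c_1}_n\beta^{c_2}_m}$; this is precisely the additive piece $F(\beta^{c_1}_n,\beta^{c_2}_m)$, computed directly from (\ref{PropForm}) and (\ref{InverseForm}) and free of the trigonometric data.

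First I would restrict attention to the cells actually touched by targets $n$ and $m$. A target in cell $c$ feeds only cells $c$ and $c-1$ (with weights $\beta$ and $1-\beta$), so the mean-derivative vectors are supported on the three cells $\{c_1-1,c_1,c_2\}$, which is exactly why $\Sigma$ may be replaced by the reduced $\Sigma_*$ of (\ref{sigmastar}) and $\bar\rho$ by $\bar\rho_*$ of (\ref{rostar}). Next I would differentiate the mean. Because $\theta^{c_1}_n$ enters $\bar\rho_*$ only through $\psi^{c_1}_n$, the chain rule applied to (\ref{Psi})--(\ref{Omega}) gives $\partial_{\theta^{c_1}_n}\omega^{c_1}_n(l)=\frac{2\pi}{\lambda}(\textbf{p}^{c_1}_n)'\Omega(:,l)$, so each $\theta$-derivative emits a factor $\frac{2\pi}{\lambda}(\textbf{p}^{c_1}_n)'\Omega(:,l)$ together with a $\sin$/$\cos$ swap; the amplitudes $\bar\xi,\bar\zeta$ enter (\ref{Average}) linearly, so their derivatives merely strip off the amplitude; and each $\beta$-derivative acts on the cell weight and, through (\ref{Ratio}), emits the factor $1/r_{bin}$. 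This accounts for the prefactors $(2\pi/\lambda)^2$, $(2\pi/\lambda)/r_{bin}$ and $1/r^2_{bin}$ appearing across (\ref{FIMs}).

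The decisive structural step is the Kronecker form $\Sigma^{-1}_*=[k_{ij}]\odot I_{2MN}$ of (\ref{InverseForm}). Each mean-derivative vector factors as (a three-entry pattern over the cells $\{c_1-1,c_1,c_2\}$ carrying the weights) $\odot$ (a within-cell vector of length $2MN$ built from $\Re\psi$, $\Im\psi$ and their $\theta$-derivatives). Contracting two such vectors through $\Sigma^{-1}_*$ therefore separates into a cell factor---the quadratic form of the weight patterns against $[k_{ij}]$, which yields exactly the coefficients $C_{\theta^{c_1}_n\theta^{c_2}_m}$ and $C_{\beta^{c_1}_n\beta^{c_2}_m}$ (the former using $\breve\beta$, the latter using $[-1\ 1]$, the two placements recording whether $c_1=c_2$ or $c_1=c_2-1$)---times an antenna-pair factor $\sum_{l=1}^{MN}(\cdots)$ contracted through $I_{2MN}$. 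Collapsing the within-cell inner products by the product-to-sum identities turns $\Re\psi\,\Re\psi+\Im\psi\,\Im\psi$ into $\cos(\omega^{c_2}_m(l)-\omega^{c_1}_n(l))$ and the mixed products into $\sin(\omega^{c_2}_m(l)-\omega^{c_1}_n(l))$, while the amplitude bilinears assemble into $\kappa^{nm}_{c_1c_2}$ and $\iota^{nm}_{c_1c_2}$ of Definition~\ref{Notation2}. Matching terms reproduces each line of (\ref{FIMs}), and the mixed coefficients follow from the product rule as $C_{\theta^{c_1}_n\beta^{c_2}_m}=\partial C_{\theta^{c_1}_n\theta^{c_2}_m}/\partial\beta^{c_2}_m$, once one notes that the $\beta$-derivative also acts on the weights and on the covariance buried inside $C_{\theta\theta}$.

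The main obstacle is the bookkeeping of the cell-block structure: one must place each target's weight pattern correctly into the reduced three-cell layout and keep straight the two cases $c_1=c_2$ and $c_1=c_2-1$, which fix whether a given weight vector begins with $1-\beta$ or with $\beta$. The only genuinely delicate point is the ratio $\beta^{c_1}_n$: since it parametrizes both the mean (through the weights) and the covariance (through (\ref{PropForm})), the full $\beta$-derivative is a sum of a weight term and a covariance term, and care is needed to route the former into $C_{\beta\beta}$ and the latter into $F$.
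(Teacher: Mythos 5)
Your proposal follows essentially the same route as the paper's proof: the Gaussian FIM (Slepian--Bangs) formula applied to $\rho_*$ and $\Sigma_*$ on the reduced three-cell block, vanishing of the trace term except in the $\beta$--$\beta$ entry (which the paper absorbs into $F$), differentiation of the mean via (\ref{Average})--(\ref{Omega}), and product-to-sum identities that collapse the within-cell inner products into $\kappa^{nm}_{c_1c_2}\cos(\omega^{c_2}_m(l)-\omega^{c_1}_n(l))+\iota^{nm}_{c_1c_2}\sin(\omega^{c_2}_m(l)-\omega^{c_1}_n(l))$, with the cell-weight quadratic forms against $\Sigma^{-1}_*$ factoring out as the coefficients $C_{\theta^{c_1}_n\theta^{c_2}_m}$, $C_{\beta^{c_1}_n\beta^{c_2}_m}$. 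Your tensor-product phrasing of the separation (cell pattern versus antenna-pair factor) is just a cleaner statement of the term-by-term expansion in the paper's appendix, which works out only $J_{\theta^{c_1}_n\theta^{c_2}_m}$ explicitly and declares the remaining entries similar, so your sketch is, if anything, slightly more complete.
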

\begin{proof}
See Appendix \ref{ProofCRLB}.
\end{proof}
Note that the above proposition can be used to find the FIM for every pair $\{\Theta^{c_1}_n,\Theta^{c_2}_m\}$ where $1 \leq \{c_1,c_2\} \leq C$ and $1 \leq n \leq n_{c_1}$ and $1 \leq m \leq n_{c_2}$. The FIMs calculated in (\ref{Simplest}) are then inserted in (\ref{NewForm3}) and (\ref{NewForm}), respectively, to obtain $J_{\Theta\Theta'}$. The CRLB is finally found by inverting the FIM as
\begin{equation}
C_{XX'}=(\Gamma^{-1})'C_{\Theta\Theta'}\Gamma^{-1}
\end{equation}
with $C_{\Theta\Theta'}=(J_{\Theta\Theta'})^{-1}$.
\section{Optimal Antenna Allocation}
\subsection{Motivation}
It can be shown that the localization performance of the collocated MIMO radar is affected by the distribution of antennas in the surveillance region. To illustrate this, consider a representative scenario with two antennas ($N=2, M=2$), where each antenna can both transmit and receive signals. We take a single target scenario into consideration with parameters $[30^o \ .33 \ 1 \ 1]'$, which is located in $\{r,\theta\}=[825\mbox{m} \ 30^o]'$. The variance of DOA estimates ($C_{\theta^2}$) is now shown in Figure \ref{CRLBSingleTest} in terms of different inter-antenna distances for the designed scenario. It can be observed that the geometry of sensors (inter-sensor distances) affects the performance bound of DOA estimation, where the estimation variance at the minimum point is $33\%$ lower than the maximum variance.

Unfortunately, the graphical tool cannot be developed for cases with more antennas. Therefore, this section concerns with designing a systematic algorithm for the antenna allocation problem in collocated MIMO radars. First, the case with a single target in the surveillance region is considered. It is shown that by considering suitable geometric constraints, the antenna allocation problem can be formulated as an SDP procedure \cite{Boyd2004}. Then, the problem is extended to the case with multiple targets in the same or consecutive resolution cells. It is shown that the derived cost function is non-convex and a sampling-based approach is proposed to capture the global minimum of the cost function.
\begin{figure}
\centering
\includegraphics[width=8cm]{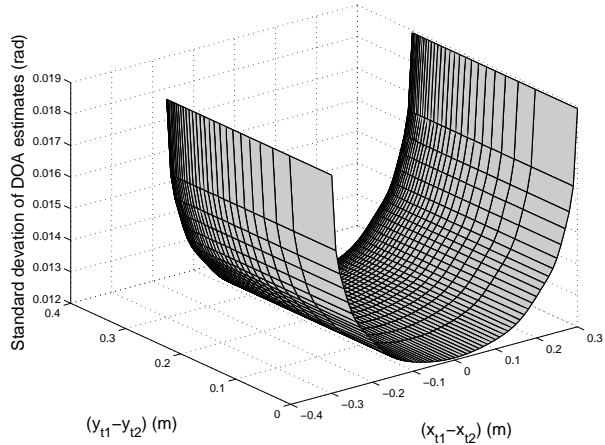}
\caption{The variance of the DOA estimation for different inter-sensor distances. The designed scenario includes a single-emitter and a collocated MIMO radar with two transmitters and two receivers.}
\label{CRLBSingleTest}
\end{figure}
\subsection{Single Target Case}
When a single target is placed in an arbitrary resolution cell, all $\cos(.)$ and $\sin(.)$ terms in the individual entries of the FIM defined by (\ref{Simplest}) vanish. Let us assume $\Theta=\left[\theta^{c} \ \beta^{c} \ \bar{\xi}^{c} \ \bar{\zeta}^c\right]'$ as the parameter vector of the single target fallen in the $c$-th cell. Using the results given in Proposition \ref{CRLBProp}, it can be observed that only terms $J_{\theta^c \nu^c}$ are a function of the antenna locations where $\nu^c \in \{\theta^c,\beta^c,\bar{\xi}^c,\bar{\zeta}^c\}$. On the other hand, according to the definition of the matrix $\Omega$ in (\ref{Omega}), one can show that:
\begin{equation}
\sum_{l=1}^{MN} \Omega(:,l)=0
\end{equation}
It can be observed that only the term $J_{(\theta^c)^2}$ can be considered as a function of the antenna locations.
\begin{definition}\label{InterSensorDefinition}
Define the difference between the $m$-th transmitter and the $n$-th receiver as $\Delta \textbf{s}_{nm}=\textbf{s}_{tm}-\textbf{s}_{rn}$.
\end{definition}
\begin{corollary}
In a collocated MIMO radar with $M$ transmitters and $N$ receivers, where a single target is located in the $c$-th resolution cell, the FIM is a function of inter-antenna differences.
In addition, all entries of the FIM are independent of the inter-sensor differences except $J_{(\theta^c)^2}$, which is also a convex function of the unknown differences.
\end{corollary}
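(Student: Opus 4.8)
The plan is to exploit two structural facts that are already available in the excerpt: that for a single target every phase difference $\omega^{c}(l)-\omega^{c}(l)$ appearing in Proposition \ref{CRLBProp} is identically zero, and that $\sum_{l=1}^{MN}\Omega(:,l)=0$. First I would unpack the Kronecker-product definition (\ref{Omega}) to read off the geometric content of $\Omega$: under the index map $l=(m-1)N+n$, its $l$-th column is a power-scaled copy of the transmitter--receiver offset, i.e. proportional to $\sqrt{P_m}\,\Delta\textbf{s}_{nm}$ with $\Delta\textbf{s}_{nm}=\textbf{s}_{tm}-\textbf{s}_{rn}$ as in Definition \ref{InterSensorDefinition}. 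Hence each $\Omega(:,l)$ is a linear function of the inter-antenna differences alone, and since $\omega^{c}(l)=\frac{2\pi}{\lambda}[\sin\theta^c\ \cos\theta^c]\,\Omega(:,l)$ depends on the array only through $\Omega(:,l)$, the entire FIM is a function of the differences. Moreover, since the covariance $\Sigma_*$ in (\ref{sigmastar}) is assembled from the $c_i$-terms of (\ref{PropForm}), which involve only the ratios $\beta$ and the noise level, the coefficients $C_{\theta^c\theta^c}$, $C_{\beta^c\beta^c}$, $C_{\theta^c\beta^c}$, $C_{\beta^c\theta^c}$ are constants with respect to the antenna positions; all antenna dependence is therefore carried by the $\Omega(:,l)$ factors.

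Next I would evaluate every entry of (\ref{Simplest}) under $n=m$ and $c_1=c_2=c$, using $\cos(0)=1$, $\sin(0)=0$, $\iota^{11}_{cc}=\bar{\xi}^c\bar{\zeta}^c-\bar{\zeta}^c\bar{\xi}^c=0$ and $\kappa^{11}_{cc}=(\bar{\xi}^c)^2+(\bar{\zeta}^c)^2$. The entries $J_{\bar{\xi}^c\bar{\xi}^c}=J_{\bar{\zeta}^c\bar{\zeta}^c}$, $J_{\bar{\xi}^c\bar{\zeta}^c}$, $J_{\beta^c\beta^c}$, $J_{\beta^c\bar{\xi}^c}$ and $J_{\beta^c\bar{\zeta}^c}$ each reduce to a sum $\sum_{l}(\mathrm{const})$ with no $\Omega(:,l)$ factor, so they collapse to multiples of $MN$ (together with the antenna-free term $F$) and are independent of the differences; in particular $J_{\bar{\xi}^c\bar{\zeta}^c}=0$. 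The three cross terms $J_{\theta^c\beta^c}$, $J_{\theta^c\bar{\xi}^c}$ and $J_{\theta^c\bar{\zeta}^c}$ all carry the common factor $(\textbf{p}^c)'\sum_{l}\Omega(:,l)$, which vanishes by the identity $\sum_{l}\Omega(:,l)=0$; hence these entries are identically zero. This isolates $J_{(\theta^c)^2}$ as the unique antenna-dependent entry, establishing the second assertion.

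Finally, for convexity I would write the surviving entry as
\begin{equation}
J_{(\theta^c)^2}=C_0\sum_{l=1}^{MN}\left[(\textbf{p}^c)'\Omega(:,l)\right]^2,\qquad C_0=K\left(\frac{2\pi}{\lambda}\right)^2\left[(\bar{\xi}^c)^2+(\bar{\zeta}^c)^2\right]C_{\theta^c\theta^c}.
\end{equation}
Because $\Sigma_*^{-1}$ is positive definite (as the inverse of a covariance with $\sigma^2_w>0$) and $\breve{\beta}^c\neq 0$, the scalar $C_{\theta^c\theta^c}=[0\ (\breve{\beta}^c)']\Sigma_*^{-1}[0\ (\breve{\beta}^c)']'$ is strictly positive, so $C_0>0$. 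Each summand is the square of the map $\Delta\textbf{s}_{nm}\mapsto \sqrt{P_m}\,(\textbf{p}^c)'\Delta\textbf{s}_{nm}$, which is affine (indeed linear) in the inter-antenna differences; a squared affine form is a positive-semidefinite quadratic, and a nonnegatively weighted sum of convex functions is convex. Equivalently, writing the $l$-th form's gradient as $\textbf{a}_l$, the Hessian $2C_0\sum_{l}\textbf{a}_l\textbf{a}_l'$ is positive semidefinite. Hence $J_{(\theta^c)^2}$ is convex in the differences.

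The main obstacle I anticipate is the bookkeeping in the first step: carefully reading off from the Kronecker structure of (\ref{Omega}) that the columns of $\Omega$ are precisely the power-weighted offsets $\sqrt{P_m}\,\Delta\textbf{s}_{nm}$, confirming the index correspondence $l\leftrightarrow(n,m)$, so that linearity in the \emph{differences} (and not in the absolute antenna coordinates) is rigorous. Once this representation, together with the identity $\sum_{l}\Omega(:,l)=0$ and the positivity of $C_0$, is in hand, the vanishing of the cross terms and the sum-of-squares convexity of $J_{(\theta^c)^2}$ follow mechanically.
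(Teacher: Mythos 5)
Your proposal is correct and follows essentially the same route as the paper: antenna dependence enters only through the columns $\Omega(:,l)$, which are linear in the difference vectors $\Delta \textbf{s}_{nm}$; the single-target case kills all sinusoidal terms, the identity $\sum_{l=1}^{MN}\Omega(:,l)=0$ kills the cross terms $J_{\theta^c\nu^c}$, and $J_{(\theta^c)^2}$ is then a nonnegatively weighted sum of squares of linear forms, hence convex. Your explicit verification that the coefficient $C_0$ is positive (via positive definiteness of $\Sigma_*^{-1}$) is a detail the paper leaves implicit, but it refines rather than changes the argument.
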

\begin{proof}
It was shown that only $J_{(\theta^c)^2}$ is a function of the sensor locations. Now, it is demonstrated that it is a convex function of the parameters (difference vectors). Using the algebraic terms given by Proposition 2, the entry $J_{(\theta^c)^2}$ can be simplified into the following form:
\begin{equation}\label{SingleTargetForm1}
J_{(\theta^c)^2}=K\left(\frac{2\pi}{\lambda}\right)^2|\alpha^c|^2 \left[\sum_{l=1}^{nm}(\textbf{p}^c)'\Omega(:,l)\Omega'(:,l)\textbf{p}^c\right]C_{(\theta^c)^2}
\end{equation}
Consider the definition of $\Omega$ in (\ref{Omega}). It can be then observed that $\Omega(:,l)$ is a linear function of the corresponding difference vector $\Delta \textbf{s}_{nm}$. It is also known that $J_{(\theta^c)^2}$ is a convex function of $\Omega(:,l)$ terms due to the appearance of quadratic terms in (\ref{SingleTargetForm1}) \cite{Boyd2004}. Therefore, $J_{(\theta^c)^2}$ is also a convex function of the difference vectors.
\end{proof}
The antenna allocation problem can be now dealt with by minimizing the trace of CRLB, maximizing the determinant of FIM, or minimizing the maximum eigenvalue of CRLB \cite{Li2008}. The following lemma
proposes the convex optimization formulation for the antenna allocation problem in a collocated MIMO radar system where a single-target scenario is considered:
\begin{lemma}
Consider a collocated MIMO radar with $M$ transmitters and $N$ receivers. In addition, assume that there is a
single target located in the $c$-th resolution cell. Then, a convex optimization algorithm that finds an optimal placement of antennas is given as follows:
\begin{equation}\label{Optimization1}
\begin{array}{cc}
\max_{\left\{\Delta \textbf{s}_{11},\cdots,\Delta \textbf{s}_{nm}\right\}} & J_{(\theta^c)^2}
\end{array}
\end{equation}
\end{lemma}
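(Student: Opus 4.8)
The plan is to prove the lemma in two stages. First I would show that for a single target the FIM decouples into a scalar block $J_{(\theta^c)^2}$ that carries all of the antenna-geometry dependence, together with a constant $3\times 3$ block. Second I would use this decoupling to show that each admissible design criterion (the trace of the CRLB, the determinant of the FIM, and the maximum eigenvalue of the CRLB) is a monotone function of $J_{(\theta^c)^2}$, so that optimal allocation is equivalent to the scalar program (\ref{Optimization1}); the convexity of that program is then inherited from the Corollary.

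First I would specialize Proposition \ref{CRLBProp} to a single target, where $c_1=c_2=c$ and $n=m$. Then $\omega^{c_2}_m(l)-\omega^{c_1}_n(l)=0$ for every $l$, so that every cosine equals one, every sine vanishes, $\kappa^{nn}_{cc}=|\alpha^c|^2$, and $\iota^{nn}_{cc}=0$. Substituting this into the entries $J_{\theta^c\beta^c}$, $J_{\theta^c\bar{\xi}^c}$ and $J_{\theta^c\bar{\zeta}^c}$, each retains the common factor $\sum_{l=1}^{MN}(\textbf{p}^c)'\Omega(:,l)=(\textbf{p}^c)'\left(\sum_{l=1}^{MN}\Omega(:,l)\right)$, which vanishes by the identity $\sum_{l=1}^{MN}\Omega(:,l)=0$ noted above. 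Hence the entire first row and column of the $4\times 4$ FIM are zero except for the diagonal term $J_{(\theta^c)^2}$.

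Next I would verify that the surviving $3\times 3$ block indexed by $(\beta^c,\bar{\xi}^c,\bar{\zeta}^c)$ does not depend on the antenna geometry. After the cosines collapse to one and the sines drop out, each such entry reduces to $MN$ times a constant assembled from $|\alpha^c|^2$, $r_{bin}$, the fixed function $F$, and the coefficients $C_{(\cdot)(\cdot)}$, which depend only on $\Sigma^{-1}_*$ and the $\breve{\beta}^c$ vectors, hence only on $\beta^c$. Crucially, the factor $\Omega(:,l)$ that carried the dependence on the differences $\Delta\textbf{s}_{nm}$ no longer appears. Consequently the FIM is block diagonal, $J_{\Theta\Theta'}=\mathrm{diag}(J_{(\theta^c)^2},B)$ with $B$ constant, and the CRLB inherits the same form, $C_{\Theta\Theta'}=\mathrm{diag}(1/J_{(\theta^c)^2},B^{-1})$.

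The reduction is then immediate: $\det(J_{\Theta\Theta'})=J_{(\theta^c)^2}\det(B)$ is increasing in $J_{(\theta^c)^2}$; $\mathcal{T}(C_{\Theta\Theta'})=1/J_{(\theta^c)^2}+\mathcal{T}(B^{-1})$ is decreasing in it; and the largest eigenvalue of $C_{\Theta\Theta'}$, being $\max\{1/J_{(\theta^c)^2},\lambda_{\max}(B^{-1})\}$, is non-increasing in it. All three criteria are therefore optimized by the single maximization in (\ref{Optimization1}). The remaining point is convexity, and this is the delicate step: by the Corollary $J_{(\theta^c)^2}$ is a sum of squares of maps affine in the $\Delta\textbf{s}_{nm}$, i.e. convex, and \emph{maximizing} a convex function is not itself a convex program in the raw variables. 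I would therefore close the argument at the level of the standard semidefinite lift, writing $J_{(\theta^c)^2}$ as a linear functional of the rank-one matrix formed from the stacked difference vector, dropping the rank constraint to obtain a positive-semidefinite variable, and absorbing the geometric constraints as linear matrix inequalities; the maximization then becomes a genuine (convex) SDP, as asserted.
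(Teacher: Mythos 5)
Your proof is correct, and its skeleton matches the paper's: all antenna-geometry dependence of the single-target FIM is concentrated in the scalar $J_{(\theta^c)^2}$, so the matrix design criterion collapses to the scalar maximization (\ref{Optimization1}). The execution, however, differs in ways worth noting. The paper keeps the off-diagonal block $\textbf{b}$ of $J_{\Theta^c(\Theta^c)'}$ as a (possibly nonzero) constant and argues only through the determinant criterion, via the Schur-complement identity $|J_{\Theta^c(\Theta^c)'}|=|B|\,\bigl(J_{(\theta^c)^2}-\textbf{b}'B^{-1}\textbf{b}\bigr)$ (the paper's displayed formula actually omits the inverse on $B$, a typo). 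You instead push the identity $\sum_{l=1}^{MN}\Omega(:,l)=0$ into the cross terms $J_{\theta^c\beta^c}$, $J_{\theta^c\bar{\xi}^c}$, $J_{\theta^c\bar{\zeta}^c}$ of Proposition \ref{CRLBProp} to conclude $\textbf{b}=0$ exactly, i.e.\ the FIM is genuinely block diagonal; that stronger decoupling is what lets you treat all three criteria (determinant, trace of the CRLB, maximum eigenvalue of the CRLB) by elementary monotonicity, where the paper's proof covers only the determinant. You also explicitly confront the lemma's wording ``convex optimization algorithm'': since $J_{(\theta^c)^2}$ is convex in the difference vectors, its maximization is not a convex program as written, and you repair this with the rank-relaxed semidefinite lift --- precisely the content the paper defers to Theorem \ref{Theorem1}, about which its own lemma proof is silent. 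Two small additions would make your argument airtight: (i) the criteria are applied to the location CRLB $C_{XX'}=(\Gamma^{-1})'C_{\Theta\Theta'}\Gamma^{-1}$, not to $C_{\Theta\Theta'}$, so you should note (as the paper does) that the system matrix $\Gamma$ is independent of the antenna locations, whence your monotonicity claims transfer by Loewner monotonicity and $\det(C_{XX'})=\det(\Gamma^{-1})^2\det(C_{\Theta\Theta'})$; (ii) your trace and eigenvalue reductions implicitly use $J_{(\theta^c)^2}>0$ and $B\succ 0$, which hold here but deserve a word.
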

\begin{proof}
The optimization problem can be formulated as minimizing the determinant of the CRLB, which is equivalent to maximizing $|J_{XX'}|$. In addition, the system matrix $\Gamma$ defined in (\ref{SystemMatrix}) is independent of the location of the antennas. Therefore, the final goal is to maximize $|J_{\Theta^{c}(\Theta^{c})'}|$. Now, the FIM in (\ref{Simplest}) can be written in the following new form:
\begin{equation}
J_{\Theta^{c}(\Theta^{c})'}=\left[\begin{array}{cc}
J_{(\theta^c)^2} & \textbf{b}'\\
\textbf{b} & B
\end{array}
\right]
\end{equation}
where $\textbf{b}$ and $B$ are block vector and matrix, respectively, formed by remaining entries of $J_{\Theta^{c}{\Theta^{c}}}$ in (\ref{Simplest}), respectively. The determinant term can be
written as
\begin{equation}
|J_{\Theta^{c}{\Theta^{c}}}|=|B||J_{(\theta^c)^2}-\textbf{b}'B\textbf{b}|
\end{equation}
It is known that both $B$ and $\textbf{b}$ are independent of the antenna placement. Therefore, the determinant maximization can be achieved by maximizing $J_{(\theta^c)^2}$ with respect to $\Omega$. However, it is also known that $\Omega$ is a linear function of $\Delta \textbf{s}_{nm}$ terms. The optimization problem can be finally simplified to
maximizing $J_{(\theta^c)^2}$ with respect to the $\Delta \textbf{s}_{nm}$ terms, which is the final form given in (\ref{Optimization1}).
\end{proof}
The final optimization problem can be now constructed by imposing the following constraints on the inter-antenna distances:

\textbf{The inter-antenna distance}:\\
In practice, antennas need to be well-separated to ensure maintenance and safety considerations. In addition, the inter-antenna distance should be small enough to have the far-field assumption still valid. Based on the given targets, the following constraints can be considered:
\begin{eqnarray}
||\Delta \textbf{s}_{nm}||_2 &\geq& d_{nm}  \\
||\Delta \textbf{s}_{nm}||_2 &\leq& e_{nm}, \forall \ m=1,...,M, \ n=1,...,N
\end{eqnarray}
where $e_{nm}$ and $d_{nm}$ are design parameters.

\textbf{The center of the mass constraint}:\\
It was mentioned in Assumption \ref{CenterMass} that the center of the mass of the array is located in the origin. Therefore, the following new constraints are formed on the location of antennas:
\begin{eqnarray}\label{CenterMass2}
\sum_{m=1}^M \textbf{s}_{tm}+\sum_{n=1}^N \textbf{s}_{rn}=0
\end{eqnarray}
Note that the FIM is a function of inter-antenna distances and therefore, a set of optimal difference vectors might correspond to an infinite number of sensor locations. The constraint given by (\ref{CenterMass2}) ensures that the mass center of the obtained geometry is in the origin. The uniqueness of optimal solution is further discussed in this section.

Considering the above defined constraints, the new optimization problem can be written as follows:
\begin{equation}\label{Optimization3}
\begin{array}{cc}
\max_{\left\{\Delta \textbf{s}_{11},\cdots,\Delta \textbf{s}_{nm}\right\}} & \sum_{m=1}^M\sum_{n=1}^N (\textbf{p}^c)'\Delta \textbf{s}_{nm} \Delta \textbf{s}'_{nm}\textbf{p}^c\\
\mbox{S.T} & ||\Delta \textbf{s}_{nm}||_2 \geq d_{nm}\\
 & ||\Delta \textbf{s}_{nm}||_2 \leq e_{nm}\\
 & \sum_{m=1}^M \textbf{s}_{tm}+\sum_{n=1}^N \textbf{s}_{rn}=0, \forall \ m=\{1,\cdots,M\}, n=\{1,\cdots,N\}
\end{array}
\end{equation}
In writing the above equation, it is assumed that the transmitted powers are all the same and unitary ($P_1=P_2=\cdots=P_M=1$).
The optimization problem given by (\ref{Optimization3}) is not convex and therefore cannot be solved using the standard approaches. The following theorem reformulates the optimization problem in (\ref{Optimization3}) as an SDP:
\begin{thm}\label{Theorem1}
Consider a single-target scenario with a collocated MIMO radar being used as the measurement tool. Defining $T^*=\{T_{11},\cdots,T_{nm}\}$,
$S^*=\{\textbf{s}_{t1},\cdots,\textbf{s}_{rN}\}$, and $\textbf{t}=\left[t_{11} \ \cdots \ t_{nm}\right]'$,
the optimal placement of transmitters and receivers that maximizes the determinant of FIM is found by solving the
following SDP optimization problem:
\begin{equation}
\begin{array}{cc}
\max_{T^*,S^*,\textbf{t}} & \sum_{m=1}^M \sum_{n=1}^N t_{nm}\\
\mbox{S.T.} & \sum_{m=1}^M \textbf{s}_{tm}+\sum_{n=1}^N \textbf{s}_{rn}=0\\
 & \mathcal{T}\left(T_{nm}P\right) \geq t_{nm}\\
 & \left[\begin{array}{cc}
-I_{2\times 2} & \textbf{s}_{tm}-\textbf{s}_{rn}\\
\textbf{s}'_{tm}-\textbf{s}'_{rn} & -e^2_{nm}
\end{array}\right]\preceq 0, \ \left[\begin{array}{cc}
I_{2\times 2} & \textbf{s}_{tm}-\textbf{s}_{rn}\\
\textbf{s}'_{tm}-\textbf{s}'_{rn} & d^2_{nm}
\end{array}\right]\preceq 0\\
& \nonumber \\
& \left[\begin{array}{cc}
1 & \textbf{s}'_{tm}-\textbf{s}'_{rn}\\
\textbf{s}_{tm}-\textbf{s}_{rn} & T_{nm}
\end{array}\right]\preceq 0, \ \forall \ m=\{1,\cdots,M\}, n=\{1,\cdots,N\}\\
\end{array}
\end{equation}
with $P=\textbf{p}^c(\textbf{p}^c)'$, and $\preceq$ as the generalized inequality operator.
\end{thm}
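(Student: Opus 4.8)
The plan is to turn the non-convex program (\ref{Optimization3}) into a linear objective subject to a finite family of linear matrix inequalities, i.e.\ an SDP, in three moves: linearise the quadratic cost through a lifted matrix variable, translate every quadratic relation into an LMI via the Schur complement, and then verify that the relaxation is tight. By the preceding Lemma, maximising $|J_{XX'}|$ reduces to maximising $J_{(\theta^c)^2}$, so I may start directly from the cost in (\ref{Optimization3}).

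First I would rewrite each summand. The scalar $(\textbf{p}^c)'\Delta \textbf{s}_{nm}\Delta \textbf{s}'_{nm}\textbf{p}^c$ equals $\mathcal{T}\!\left(\Delta \textbf{s}_{nm}\Delta \textbf{s}'_{nm}P\right)$ with $P=\textbf{p}^c(\textbf{p}^c)'\succeq 0$ by cyclic invariance of the trace. Introducing a symmetric matrix $T_{nm}$ in place of the rank-one product $\Delta \textbf{s}_{nm}\Delta \textbf{s}'_{nm}$ and scalar epigraph variables $t_{nm}\le\mathcal{T}(T_{nm}P)$ turns the cost into the linear objective $\sum_{n,m}t_{nm}$ appearing in the theorem, where $\Delta \textbf{s}_{nm}=\textbf{s}_{tm}-\textbf{s}_{rn}$ as in Definition \ref{InterSensorDefinition}. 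The exact coupling $T_{nm}=\Delta \textbf{s}_{nm}\Delta \textbf{s}'_{nm}$ is then relaxed to $T_{nm}\succeq\Delta \textbf{s}_{nm}\Delta \textbf{s}'_{nm}$, which, since the leading scalar block equals $1>0$, is by the Schur complement lemma \cite{Boyd2004} exactly the block LMI in $\{1,\Delta \textbf{s}_{nm},T_{nm}\}$ of the statement. The same tool handles the magnitude bounds: $||\Delta \textbf{s}_{nm}||_2\le e_{nm}$ is the Schur complement of the block with leading $-I_{2\times 2}$ and trailing $-e^2_{nm}$, while the separation bound $||\Delta \textbf{s}_{nm}||_2\ge d_{nm}$ is carried by its companion block; the center-of-mass identity $\sum_m\textbf{s}_{tm}+\sum_n\textbf{s}_{rn}=0$ is already affine and kept verbatim. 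Collecting a linear objective, affine equalities, the linear inequality $\mathcal{T}(T_{nm}P)\ge t_{nm}$, and finitely many LMIs yields an SDP.

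The step I expect to be the main obstacle is the exactness and well-posedness of the lifting. Because $P$ is rank one, only $(\textbf{p}^c)'T_{nm}\textbf{p}^c$ enters the cost, and the relaxation $T_{nm}\succeq\Delta \textbf{s}_{nm}\Delta \textbf{s}'_{nm}$ by itself does not bound this quantity; I would therefore argue, using the magnitude cap $||\Delta \textbf{s}_{nm}||_2\le e_{nm}$, that an optimal $T_{nm}$ can be taken rank one and equal to $\Delta \textbf{s}_{nm}\Delta \textbf{s}'_{nm}$, so that the SDP value coincides with that of (\ref{Optimization3}) and the optimiser returns admissible difference vectors. The genuinely delicate point is the lower bound $||\Delta \textbf{s}_{nm}||_2\ge d_{nm}$, whose feasible set is the exterior of a disc and hence non-convex: I would make explicit the convex surrogate used to embed it in the companion LMI and verify that it neither enlarges the optimal value nor destroys rank-one recovery. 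Once tightness is established, the recovered $\Delta \textbf{s}_{nm}$ together with the center-of-mass constraint fix an antenna placement maximising $|J_{XX'}|$, which is the assertion of the theorem.
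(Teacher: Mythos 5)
Your overall route is the paper's own: introduce hypograph variables $t_{nm}$, lift the rank-one product $\Delta\textbf{s}_{nm}\Delta\textbf{s}'_{nm}$ to a matrix variable $T_{nm}$, convert the norm bounds and the lifting into block matrix inequalities via Schur complements, and substitute $\Delta\textbf{s}_{nm}=\textbf{s}_{tm}-\textbf{s}_{rn}$ at the end. The genuine gap is the direction of your lifting inequality, which is the reverse of the paper's and breaks the argument. You impose $T_{nm}\succeq\Delta\textbf{s}_{nm}\Delta\textbf{s}'_{nm}$ (equivalently, via the Schur complement, $\left[\begin{array}{cc}1 & \Delta\textbf{s}'_{nm}\\ \Delta\textbf{s}_{nm} & T_{nm}\end{array}\right]\succeq 0$) while maximizing $\sum_{n,m}t_{nm}$ subject to $t_{nm}\le\mathcal{T}(T_{nm}P)$. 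Nothing then bounds $T_{nm}$ from above: for any feasible placement, $T_{nm}=\Delta\textbf{s}_{nm}\Delta\textbf{s}'_{nm}+\gamma I_{2\times 2}$ is feasible for every $\gamma>0$ and gives $\mathcal{T}(T_{nm}P)=(\textbf{p}^c)'\Delta\textbf{s}_{nm}\Delta\textbf{s}'_{nm}\textbf{p}^c+\gamma\,\mathcal{T}(P)\to\infty$, since $\mathcal{T}(P)=\|\textbf{p}^c\|^2_2=1$. So the program you wrote is unbounded and cannot have the same value as (\ref{Optimization3}). Your proposed repair --- arguing that an optimal $T_{nm}$ may be taken rank one --- cannot succeed, because in a maximization the optimizer inflates $T_{nm}$ rather than shrinking it; the cap $\|\Delta\textbf{s}_{nm}\|_2\le e_{nm}$ constrains $\Delta\textbf{s}_{nm}$, not $T_{nm}$. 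The paper's proof (Appendix~\ref{ProofTheorem1}) instead imposes $(\Delta\textbf{s}_{nm})(\Delta\textbf{s}_{nm})'\succeq T_{nm}$, which is the direction a maximization requires: it yields $t_{nm}\le\mathcal{T}(T_{nm}P)\le(\textbf{p}^c)'\Delta\textbf{s}_{nm}\Delta\textbf{s}'_{nm}\textbf{p}^c$, so the lifted value never exceeds the true one and is attained at $T_{nm}=\Delta\textbf{s}_{nm}\Delta\textbf{s}'_{nm}$.

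To be fair, the two points you flagged as delicate are exactly where the paper itself is loose: the constraint $T_{nm}\preceq\Delta\textbf{s}_{nm}\Delta\textbf{s}'_{nm}$ is not jointly convex in $(\Delta\textbf{s}_{nm},T_{nm})$ (in one dimension it is the region below a parabola), and the separation constraint $\|\Delta\textbf{s}_{nm}\|_2\ge d_{nm}$ describes the exterior of a disc, which has no exact LMI representation. The paper handles both by writing block matrices with a positive-definite leading block and demanding they be $\preceq 0$, which cannot hold literally; what is implicitly meant is that only the Schur complement is sign-constrained. So your instinct that tightness of the lifting and the lower distance bound need careful justification is sound, but your write-up leaves both as promissory notes while also reversing the one inequality that the paper does get in the direction needed for boundedness. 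As it stands, the proposal does not establish the theorem.
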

\begin{proof}
See Appendix \ref{ProofTheorem1}.
\end{proof}
The above optimization problem can be now efficiently solved using standard packages for solving SDP problems \cite{Grant2002}.
\begin{remark}
The optimization problem in (\ref{Optimization3}) proves the dependency on the parameters of the target through the matrix $P$. The following proposition shows how the optimal structure is affected by changing the DOA of the target:
\begin{proposition}\label{RotationLemma}
Consider a single-target scenario with a collocated MIMO radar being used as the measurement tool. Defining $\theta_1$ and $\theta_2$ as two different DOAs and $\{S^{o1}_t,S^{o1}_r\}, \{S^{o2}_t,S^{o2}_r\}$ as the assigned optimal antenna allocations, respectively, the following equations are valid:
\begin{eqnarray}
\textbf{s}^{o2}_{tm} &=& G_{\Delta \theta}  \textbf{s}^{o1}_{tm} \\
 \textbf{s}^{o2}_{rn} &=& G_{\Delta \theta}  \textbf{s}^{o1}_{rn}
 \ \forall \ m=\{1,\cdots,M\}, n=\{1,\cdots,N\}
\end{eqnarray}
with $\Delta \theta=\theta_2-\theta_1$ and $G_{\Delta \theta}$ as the rotation matrix defined as follows:
\begin{equation}
G_{\Delta \theta} = \left[\begin{array}{cc}
\cos(\Delta \theta) & -\sin(\Delta \theta)\\
\sin(\Delta \theta) & \cos(\Delta \theta)
\end{array}
\right]
\end{equation}
\end{proposition}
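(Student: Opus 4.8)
The plan is to exploit two structural features of the single-target optimization problem in (\ref{Optimization3}): the rotational invariance of its feasible set, and the fact that its objective depends on the target only through the unit direction vector $\textbf{p}^c$. First I would rewrite the objective compactly as $\sum_{m,n}(\textbf{p}^c)'\Delta\textbf{s}_{nm}\Delta\textbf{s}'_{nm}\textbf{p}^c=(\textbf{p}^c)'Q\,\textbf{p}^c$, where $Q=\sum_{m,n}\Delta\textbf{s}_{nm}\Delta\textbf{s}'_{nm}$ is the $2\times2$ positive semidefinite scatter matrix of the inter-antenna difference vectors. The target parameters enter only through $\textbf{p}^c=[\cos\theta^c\ -\sin\theta^c]'$, so the entire DOA dependence of the problem is carried by this single direction.

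Next I would show that applying a common rotation $G_{\Delta\theta}$ to every antenna position, $\textbf{s}_{tm}\mapsto G_{\Delta\theta}\textbf{s}_{tm}$ and $\textbf{s}_{rn}\mapsto G_{\Delta\theta}\textbf{s}_{rn}$, is a feasibility-preserving bijection of the constraint set onto itself. This holds because the distance constraints $d_{nm}\le ||\Delta\textbf{s}_{nm}||_2\le e_{nm}$ involve only the Euclidean norms of the differences $\Delta\textbf{s}_{nm}=\textbf{s}_{tm}-\textbf{s}_{rn}$, which are invariant under the orthogonal map $G_{\Delta\theta}$, and the center-of-mass constraint $\sum_m\textbf{s}_{tm}+\sum_n\textbf{s}_{rn}=0$ is linear, hence preserved when $G_{\Delta\theta}$ is applied to all positions simultaneously. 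Under this rotation the difference vectors transform as $\Delta\textbf{s}_{nm}\mapsto G_{\Delta\theta}\Delta\textbf{s}_{nm}$ and the scatter matrix as $Q\mapsto G_{\Delta\theta}Q G_{\Delta\theta}'$.

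The heart of the argument is the trigonometric identity linking the two direction vectors by this rotation. Writing $\textbf{p}(\theta)=[\cos\theta\ -\sin\theta]'$ as the image of $[1\ 0]'$ under rotation and composing rotations, I would establish that $G_{\Delta\theta}$ carries $\textbf{p}(\theta_1)$ onto $\textbf{p}(\theta_2)$, equivalently $G_{\Delta\theta}'\,\textbf{p}(\theta_2)=\textbf{p}(\theta_1)$ up to an irrelevant overall sign. Substituting this into the rotated objective gives $(\textbf{p}(\theta_2))' G_{\Delta\theta}Q G_{\Delta\theta}'\textbf{p}(\theta_2)=(\textbf{p}(\theta_1))'Q\,\textbf{p}(\theta_1)$, so the objective at DOA $\theta_2$ evaluated at the rotated configuration equals the objective at DOA $\theta_1$ evaluated at the original one.

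Finally, since the rotation is a bijection of the feasible set that matches the $\theta_1$-objective to the $\theta_2$-objective value for value, it maps any maximizer of the $\theta_1$ problem to a maximizer of the $\theta_2$ problem; applying it to the optimal allocation $\{S^{o1}_t,S^{o1}_r\}$ yields exactly the claimed relations $\textbf{s}^{o2}_{tm}=G_{\Delta\theta}\textbf{s}^{o1}_{tm}$ and $\textbf{s}^{o2}_{rn}=G_{\Delta\theta}\textbf{s}^{o1}_{rn}$. I expect the main obstacle to be the sign bookkeeping in the rotation identity: because the second component of $\textbf{p}$ carries $-\sin\theta$, the vector $\textbf{p}(\theta)$ turns opposite to the nominal orientation, so one must carefully match the sign of $\Delta\theta=\theta_2-\theta_1$ to $G_{\Delta\theta}$ and confirm that the residual $\pm$ ambiguity is harmless because the objective is quadratic in $\textbf{p}$. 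A secondary point needing care is to argue that maximizers truly correspond, rather than merely that optimal values coincide, which follows from the bijectivity of the rotation on the feasible set.
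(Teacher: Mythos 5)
Your proposal is correct and follows essentially the same route as the paper's proof: express the $\theta_2$ objective as the $\theta_1$ quadratic form conjugated by a rotation (using $\textbf{p}(\theta_2)=G\,\textbf{p}(\theta_1)$ for an appropriate rotation $G$), observe that rotating all antenna positions compensates exactly, and verify that the norm and center-of-mass constraints are rotation-invariant. If anything, you are more careful than the paper on the two delicate points you flag: the paper itself uses the transpose of the rotation matrix from the proposition statement inside its proof (the sign/handedness issue you anticipate, since $\textbf{p}(\theta)=[\cos\theta\ -\sin\theta]'$ turns opposite to the standard orientation), and it asserts rather than argues that matching objective values over a feasibility-preserving bijection carries maximizers to maximizers.
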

\begin{proof}
See Appendix \ref{Rotation}.
\end{proof}
\end{remark}
\begin{remark}
The SDP formulation given by Theorem 1 does not provide any information regarding the uniqueness of the optimal solutions for the location of antennas. The uniqueness of solutions is now discussed in the following proposition.
\begin{proposition}\label{Uniqueness}
Consider a single-target scenario with a collocated MIMO radar being used as the measurement tool. Then, there are at least two solutions for the optimization problem in (\ref{Optimization3}) as $\{S_t^{o1},S_r^{o1}\}$ and $\{S_t^{o2},S_r^{o2}\}$. In addition, the first optimal configuration can be obtained from the second one by a simple rotation as follows:
\begin{eqnarray}
\textbf{s}^{o2}_{ti} &=& G_{\pi}\textbf{s}^{o2}_{ti}, \ i=\{1,\cdots,M\} \\
\textbf{s}^{o2}_{rj} &=& G_{\pi}\textbf{s}^{o2}_{rj}, \ j=\{1,\cdots,N\}
\end{eqnarray}
where $G_{\pi}$ is a rotation matrix with $\pi$ as the angle of rotation.
\end{proposition}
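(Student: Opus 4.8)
The plan is to recognize that the non-convex program (\ref{Optimization3}) possesses an exact point symmetry, and that pushing one optimizer through this symmetry produces a second optimizer that is generically distinct. Since $G_\pi = \left[\begin{smallmatrix} \cos\pi & -\sin\pi \\ \sin\pi & \cos\pi \end{smallmatrix}\right] = -I_{2\times2}$, the map I would study, $\mathcal{R}:\{\textbf{s}_{tm},\textbf{s}_{rn}\}\mapsto\{G_\pi\textbf{s}_{tm},G_\pi\textbf{s}_{rn}\}$, simply negates every antenna position. First I would check that $\mathcal{R}$ sends feasible points to feasible points. Under $\mathcal{R}$ each difference vector obeys $\Delta\textbf{s}_{nm}=\textbf{s}_{tm}-\textbf{s}_{rn}\mapsto-\Delta\textbf{s}_{nm}$, so $||\Delta\textbf{s}_{nm}||_2$ is unchanged and both separation constraints $d_{nm}\leq||\Delta\textbf{s}_{nm}||_2\leq e_{nm}$ survive; the center-of-mass constraint (\ref{CenterMass2}) also survives because negation commutes with the summation and maps $0$ to $0$.

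Next I would show the objective is invariant under $\mathcal{R}$. Writing the cost as $\sum_{m,n}(\textbf{p}^c)'\Delta\textbf{s}_{nm}\Delta\textbf{s}'_{nm}\textbf{p}^c=\sum_{m,n}\big((\textbf{p}^c)'\Delta\textbf{s}_{nm}\big)^2$, replacing each $\Delta\textbf{s}_{nm}$ by $-\Delta\textbf{s}_{nm}$ leaves every squared term fixed, so the objective value is preserved exactly. (Equivalently, I could invoke Proposition \ref{RotationLemma} with $\Delta\theta=\pi$: applying $G_\pi$ carries the optimizer for DOA $\theta$ into the optimizer for $\theta+\pi$; but $\textbf{p}^c$ evaluated at $\theta+\pi$ equals $-\textbf{p}^c$ evaluated at $\theta$, hence $P=\textbf{p}^c(\textbf{p}^c)'$ and the entire problem coincide for the two DOAs, so both configurations are optimal for the same instance.) Combining feasibility preservation with objective invariance, if $\{S_t^{o1},S_r^{o1}\}$ attains the maximum then so does $\{S_t^{o2},S_r^{o2}\}=\{G_\pi S_t^{o1},G_\pi S_r^{o1}\}$, which is the claimed $\pi$-rotation relationship between the two optima.

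The hard part, and the only step requiring care, is arguing that these two optima are genuinely distinct rather than a single configuration counted twice. As ordered tuples, $\mathcal{R}$ fixes a configuration only if $G_\pi\textbf{s}=\textbf{s}$ for every position, i.e. $-\textbf{s}=\textbf{s}$, forcing every transmitter and receiver to the origin and hence every $\Delta\textbf{s}_{nm}=0$; this contradicts the strictly positive lower bound $||\Delta\textbf{s}_{nm}||_2\geq d_{nm}>0$. Thus no feasible (let alone optimal) configuration is $G_\pi$-invariant, so $\{S_t^{o1},S_r^{o1}\}$ and its image $\{S_t^{o2},S_r^{o2}\}$ are two different optimizers, establishing non-uniqueness. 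I expect this fixed-point exclusion to be the crux; it is handled cleanly by the separation constraint, so I anticipate no serious obstacle beyond bookkeeping.
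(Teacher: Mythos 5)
Your proposal is correct, and it takes a genuinely different --- and more direct --- route than the paper. The paper argues structurally: it treats the cost as a function of the difference vectors, posits that any two optimal configurations are related by a rigid transformation $\textbf{s}^{oi}_{tl}=G_{\theta}\textbf{s}^{oj}_{tl}+\textbf{b}_{tl}$, rules out a nonzero translation by contradiction with the center-of-mass constraint, and then explicitly computes $U=G'_{\theta}P^{c}G_{\theta}$ to determine for which rotation angles the quadratic form is preserved, concluding $\theta=n\pi$ (with a side discussion of the degenerate case where $V=U-P^{c}$ is indefinite). You instead verify the symmetry forwards: $G_{\pi}=-I_{2\times2}$, and negating every antenna position preserves the constraint norms $||\Delta\textbf{s}_{nm}||_2$, the linear center-of-mass equation, and the sum-of-squares objective $\sum_{m,n}\bigl((\textbf{p}^{c})'\Delta\textbf{s}_{nm}\bigr)^{2}$, so the image of any optimizer is again an optimizer. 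Your approach buys two things. First, economy: no trigonometric computation of $U$ and no classification of cost-preserving rotations is needed. Second, rigor on a point the paper skips entirely: the paper never checks that the $\pi$-rotated configuration actually differs from the original, whereas your fixed-point argument (a $G_{\pi}$-invariant configuration forces every antenna to the origin, contradicting $||\Delta\textbf{s}_{nm}||_2\geq d_{nm}>0$) is precisely the step that turns ``at least two solutions'' from an assertion into a theorem. What the paper's longer analysis would buy, if carried out rigorously, is a characterization of \emph{all} rotations preserving optimality rather than the existence of just one nontrivial symmetry; but as written it even rests on a shaky premise --- that the optimal set of difference vectors is unique --- which your own symmetry ($\{\Delta\textbf{s}_{nm}\}\mapsto\{-\Delta\textbf{s}_{nm}\}$ preserves the cost) directly refutes, so your argument is also the more watertight of the two.
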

\begin{proof}
See Appendix \ref{UniquenessProof}.
\end{proof}
\end{remark}
\subsection{Multiple-target Case}
When multiple targets fall inside the same resolution cell (or consecutive cells), the individual entries of the FIM in (\ref{Simplest}) are no longer convex.
\begin{proposition}
Consider a collocated MIMO radar system with $M$ transmitters and $N$ receivers with $d_{nm} \leq ||\Delta \textbf{s}_{nm}|| \leq e_{nm}, \ \forall \ m=\{1,\cdots,M\}, n=\{1,\cdots,N\}$. Also, assume a scenario with two targets in the $c$-th resolution cell with parameters $\Theta^c_1$ and $\Theta^c_2$, respectively. Then, the term $(\omega^{c}_1(:,l)-\omega^{c}_2(:,l))$ falls in the following interval:
\begin{equation}\label{Bound}
\frac{2\pi}{\lambda}d_{nm}\sqrt{2\left(1-\cos(\theta^c_2-\theta^c_1)\right)} \leq \omega^{c}_1(:,l)-\omega^{c}_2(:,l) \leq
\frac{2\pi}{\lambda}e_{nm}\sqrt{2\left(1-\cos(\theta^c_2-\theta^c_1)\right)}
\end{equation}
with $l=\{1,\cdots,MN\}$.
\end{proposition}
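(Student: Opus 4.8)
The plan is to collapse the whole inequality into one Cauchy--Schwarz estimate for the inner product between the inter-antenna difference vector carried by the $l$-th column of $\Omega$ and a fixed ``direction-difference'' vector built from the two DOAs. First I would write the left-hand quantity directly from Definition \ref{Notation1}. Because $\Omega=\Omega(S_t,S_r,R)$ carries no target index, the two weights $\omega^c_1(l)$ and $\omega^c_2(l)$ differ only through their $1\times 2$ direction rows, so
\begin{equation}\label{DiffOmega}
\omega^{c}_1(l)-\omega^{c}_2(l)=\frac{2\pi}{\lambda}\,[\sin\theta^c_1-\sin\theta^c_2 \ \ \cos\theta^c_1-\cos\theta^c_2]\,\Omega(:,l).
\end{equation}
From the form of $\Omega$ in (\ref{Omega}), and as observed in the single-target analysis (where $\Omega(:,l)$ was shown to be linear in the corresponding difference vector), once the transmit powers are normalised to unity the column $\Omega(:,l)$ is precisely the inter-antenna difference vector $\Delta\textbf{s}_{nm}$ of the transmitter--receiver pair indexed by $l$. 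Thus the right-hand side of (\ref{DiffOmega}) is $\frac{2\pi}{\lambda}$ times the scalar $\textbf{u}'\Delta\textbf{s}_{nm}$, with $\textbf{u}=[\sin\theta^c_1-\sin\theta^c_2 \ \ \cos\theta^c_1-\cos\theta^c_2]'$.

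The second step isolates the trigonometric factor. Expanding $||\textbf{u}||^2_2$ and using $\sin^2+\cos^2=1$ together with $\sin\theta^c_1\sin\theta^c_2+\cos\theta^c_1\cos\theta^c_2=\cos(\theta^c_2-\theta^c_1)$ gives
\begin{equation}\label{NormU}
||\textbf{u}||_2=\sqrt{2-2\cos(\theta^c_2-\theta^c_1)}=\sqrt{2\left(1-\cos(\theta^c_2-\theta^c_1)\right)},
\end{equation}
which is exactly the factor appearing on both sides of (\ref{Bound}). The upper bound is then immediate: Cauchy--Schwarz gives $|\textbf{u}'\Delta\textbf{s}_{nm}|\leq ||\textbf{u}||_2\,||\Delta\textbf{s}_{nm}||_2$, and invoking $||\Delta\textbf{s}_{nm}||_2\leq e_{nm}$ and multiplying by $2\pi/\lambda$ reproduces the right-hand inequality in (\ref{Bound}).

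The delicate part is the lower bound, and this is where I expect the real work to lie. Cauchy--Schwarz controls $\textbf{u}'\Delta\textbf{s}_{nm}$ only from above; for a generic placement the inner product can vanish (whenever $\Delta\textbf{s}_{nm}$ is orthogonal to $\textbf{u}$) even though $||\Delta\textbf{s}_{nm}||_2\geq d_{nm}$, so the stated lower bound cannot hold pointwise for every geometry. The way I would justify it is to read (\ref{Bound}) as delimiting the extreme values of $|\omega^c_1(l)-\omega^c_2(l)|$ over the admissible separation range, attained in the aligned configuration $\Delta\textbf{s}_{nm}=\pm(||\Delta\textbf{s}_{nm}||_2/||\textbf{u}||_2)\,\textbf{u}$; there the inner product equals $\pm||\textbf{u}||_2\,||\Delta\textbf{s}_{nm}||_2$ in magnitude, and combining with $||\Delta\textbf{s}_{nm}||_2\geq d_{nm}$ yields the left-hand inequality. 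I therefore expect the main obstacle to be conceptual rather than computational: making precise that the lower endpoint describes the worst-case oscillation direction of the sinusoidal FIM entries (the quantity that drives the non-convexity exploited in the multiple-target section) rather than a universal bound valid for all antenna configurations.
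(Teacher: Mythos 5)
The paper never actually proves this proposition: it is stated bare in the multiple-target subsection, with no \texttt{proof} environment and no appendix devoted to it (the appendices cover Proposition~\ref{CRLBProp}, Theorem~\ref{Theorem1}, the rotation result, and the uniqueness result only). So there is no paper proof to compare against, and your proposal has to be judged on its own merits. Your decomposition is the natural one and almost certainly what the authors had in mind: by Definition~\ref{Notation1} the phase difference is $\frac{2\pi}{\lambda}\textbf{u}'\Omega(:,l)$ with $\textbf{u}=[\sin\theta^c_1-\sin\theta^c_2 \ \ \cos\theta^c_1-\cos\theta^c_2]'$; under the paper's unit-power normalization the column $\Omega(:,l)$ is (up to a sign, since the paper defines $\Delta\textbf{s}_{nm}=\textbf{s}_{tm}-\textbf{s}_{rn}$ while (\ref{Omega}) stacks $\textbf{s}_{rn}-\textbf{s}_{tm}$) the difference vector of the $l$-th transmit--receive pair, an identification the paper itself makes when passing from $J_{(\theta^c)^2}$ to the cost in (\ref{Optimization3}); and your norm computation $\|\textbf{u}\|_2=\sqrt{2(1-\cos(\theta^c_2-\theta^c_1))}$ is exactly the factor in (\ref{Bound}). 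Cauchy--Schwarz with $\|\Delta\textbf{s}_{nm}\|_2\le e_{nm}$ then gives the right-hand inequality, though only for $|\omega^c_1(l)-\omega^c_2(l)|$: the statement as printed should carry absolute values, since the phase difference can be negative.

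Your diagnosis of the lower bound is also correct, and it identifies a real defect in the proposition rather than a gap in your argument: as a pointwise claim the left-hand inequality is false. Take $\Delta\textbf{s}_{nm}$ orthogonal to $\textbf{u}$ with $\|\Delta\textbf{s}_{nm}\|_2=d_{nm}$; all constraints of the proposition are satisfied, yet $\omega^c_1(l)-\omega^c_2(l)=0$, which is strictly below the claimed positive lower endpoint whenever $\theta^c_1\neq\theta^c_2$. The inequality chain in (\ref{Bound}) holds only in the aligned configuration $\Delta\textbf{s}_{nm}\parallel\textbf{u}$, i.e.\ it describes the envelope of attainable magnitudes of the phase difference as $\|\Delta\textbf{s}_{nm}\|_2$ sweeps $[d_{nm},e_{nm}]$ in the worst-case orientation. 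That reading is consistent with how the paper actually uses the result: the quantity $\Delta\omega$ plotted in Figure~\ref{DeltaOmega} is just the gap between the two endpoints, invoked only to argue that well-separated DOAs let the sinusoidal terms in (\ref{Simplest}) oscillate over a wide range and hence destroy convexity. So your proposal is the right analysis; the one thing to make explicit is that the proposition needs to be restated --- either with absolute values and an ``extremal orientation'' qualifier, or as a bound on $\max_l |\omega^c_1(l)-\omega^c_2(l)|$ over admissible geometries --- before any proof of it can exist.
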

The above proposition states that the more separated the DOA of targets, the wider $(\omega^{c}_1(:,l)-\omega^{c}_2(:,l))$. For example, defining $\Delta \omega$ as the difference between the upper and lower bounds of $(\omega^{c}_1(:,l)-\omega^{c}_2(:,l))$ in (\ref{Bound}), Figure \ref{DeltaOmega} shows how $\Delta \omega$ changes by varying the difference between the DOA of targets. It is observed that when the targets are well-separated in the DOA space, the difference between the maximum and minimum bound is significant. This also highlights the contribution of the sinusoid terms in each entry of the FIM, which might result in several local optimum points. On the other hand, the convex relaxation approach used for the single target case cannot be applied to the cost function derived for the case with multiple targets in the same cell. The above problems make the optimization problem non-convex when there are more than one target inside each resolution cell (consecutive cells).

To handle the above problem, the optimization algorithm is solved for different initial locations of the antennas. However, a large number of initial points are required to capture the non-convexity of the cost function. The sampling approach is now proposed in Algorithm \ref{Algorithm1}. In the proposed algorithm, $Q$ denotes the covariance of the normal density function that is used to generate new initial points. While the covariance matrix is chosen experimentally, a small variance might make the algorithm be trapped in the local optimum point. Therefore, an intelligent choice of the covariance matrix can enhance the efficiency of the algorithm. The main idea behind the proposed approach is to, first, find an estimate of the optimal antenna location, which might be a local solution. Then, initial points are generated based on the obtained optimal location.
\begin{remark}
Note that the proposed algorithm does not always guarantee that the optimization algorithm captures the global solution. Due to the non-convexity of the cost function, there is also no analytical way to capture the global solution. Nevertheless, the proposed sampling approach initializes new points around the initial local solutions and pushes the overall algorithm into the global solution. As shown in the simulations, it can be observed that regardless of the initial selection of the antenna location, the algorithm always converges to a unique solution.
\end{remark}
\begin{remark}
The procedure given in Algorithm \ref{Algorithm1} is terminated when the cost function is not reduced in more than $\mu$ iterations where the parameter $\mu$ is empirically chosen. If the algorithm finds the global solution of the cost function, randomly-generated initial conditions around the optimal point does not give a lower cost and, therefore, the algorithm does not advance in subsequent iterations. In this case, the procedure is stopped after $\mu$ unsuccessful trials.
\end{remark}
\begin{figure}
\centering
\includegraphics[width=8cm]{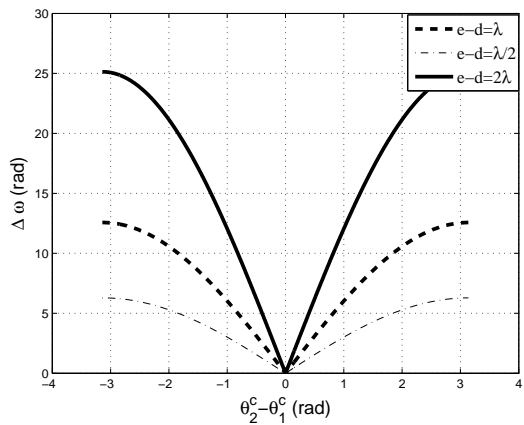}
\caption{Variation of the sinusoid argument by changing the difference between the DOA of targets.}
\label{DeltaOmega}
\end{figure}

\begin{algorithm}
\caption{The optimization algorithm for the case with multiple targets in the same resolution cell} \label{Algorithm1}
\begin{algorithmic}
\STATE \textbf{Initialization}: Generate an initial location of antennas as $\textbf{s}_{tm}^0,\textbf{s}_{rn}^0$ with $m=\{1,\cdots,M\}$ and $n=\{1,\cdots,N\}$.
\STATE \textbf{Optimization}: Find an optimal distribution of antennas by minimizing the following cost function:
\begin{equation}
\begin{array}{cc}
\min_{\textbf{s}_{t1},\cdots,\textbf{s}_{rN}} & \mathcal{T}(C_{XX'})\\
S.T. & ||\Delta \textbf{s}_{nm}|| \geq d_{nm}\\
 &  ||\Delta \textbf{s}_{nm}|| \leq e_{nm}\\
 & \sum_{m=1}^M \textbf{s}_{tm}+\sum_{n=1}^N \textbf{s}_{rn}=0, \ \forall \ m=\{1,\cdots,M\}, \ n=\{1,\cdots,N\}
\end{array}
\end{equation}
\STATE \textbf{Optimal Cost}: Initialize $\textbf{s}_{tm}^o$ and $\textbf{s}_{rn}^o$ and calculate the assigned cost as $\mathcal{C}^o=\mathcal{T}(C_{XX'})_{\textbf{s}_{t1}^o,\cdots,\textbf{s}_{rN}^o}$.
\STATE \textbf{Sampling}: While $u \leq U$ or $NA<\mu$:
\begin{itemize}
\item Sample $\textbf{s}^{0}_{tm} \sim \mathcal{N}(\textbf{s}^o_{tm},Q)$ and $\textbf{s}^{0}_{rn} \sim \mathcal{N}(\textbf{s}^o_{rn},Q)$ with $m=\{1,\cdots,M\}$ and $n=\{1,\cdots,N\}$.
\item Run the optimization algorithm and find the new distribution of antennas $\textbf{s}_{tm}^{\star},\textbf{s}_{rn}^{\star}$ and associated cost $\mathcal{C}^{\star}$.
\IF {$\frac{\mathcal{C}^{\star}}{\mathcal{C}^0} \leq 1$}
\STATE $\textbf{s}_{tm}^o=\textbf{s}_{tm}^{\star}$, $\textbf{s}_{rn}^o=\textbf{s}_{rn}^{\star}$ with $m=\{1,\cdots,M\}$ and $n=\{1,\cdots,N\}$.
\STATE $\mathcal{C}^o=\mathcal{C}^{\star}$.
\STATE $NA=0$.
\ELSE
    \STATE $NA=NA+1$.
\ENDIF
\end{itemize}
\STATE Report $\textbf{s}^o_{tm}$ and $\textbf{s}^o_{rn}$ as the optimal distribution of antennas.
\end{algorithmic}
\end{algorithm}
\section{Simulation Results}
In this section, we analyze how the optimal allocation of antennas in the surveillance region affects the localization performance of the MIMO radar system. To do this, a collocated MIMO radar is first designed with the parameters in Table \ref{TableParmaeters}. In the following subsections, the performance of the optimization algorithm is studied first for a single target scenario. Then, the simulations results will be provided for a scenario with multiple targets occupying the same resolution cell.
\begin{table}
\centering \caption{Simulation Parameters}\label{TableParmaeters}
\begin{tabular}{|c|c|c|}
\hline \hline
\textbf{Parameter} & \textbf{Description} & \textbf{Value}\\
\hline
$r_{max}$ & Maximum coverage range of transmitters & $5$ (km)\\
\hline
$r_{bin}$ & Range width& $30$ (m)\\
\hline
$\lambda$ & Wave-length& $30$ (cm)\\
\hline
$K$ & Number of snapshots & $128$\\
\hline
$\sigma^2_{\alpha}$ & Variance of the scatterers & $10^{-4}$\\
\hline
$\sigma^2_w$ & Variance of the additive noise & $1$\\
\hline
$P_m$ & Transmitted power & $1$ (W)\\
\hline \hline
\end{tabular}
\end{table}
\subsection{A Single-Target Scenario}
Initially, consider a single target located at $[410 \ -710]'$ (m). The parameters of the target are also chosen as follows:
\begin{equation}
\Theta=[-\frac{\pi}{3} \ .33 \ 3 \ 3]
\end{equation}
In the first experiment, assume that there are $M$ antennas where each antenna can both transmit and receive signals. Two antenna configurations, a Uniform-Linear-Array (ULA) with half wave-length spacing and the optimal geometry proposed in this paper, are considered in this part. For simulations, it is assumed that
$d_{mn}=\lambda, e_{mn}=2\lambda\ \forall \{m,n\}$. The optimal configuration of antennas is shown in
Figure \ref{OptimalConfgSingleAntenna} for different number of antennas. In addition, Figure \ref{CRLBOptimalVsULA} presents the location CRLB for both optimal and ULA structures separately. It can be observed that the CRLB of the optimal
configuration is much lower than that of the ULA structure. The improvement becomes more significant when the number of antennas is smaller. For example, for the case with $M=2$ antennas, the CRLB of the optimal
structure is around $6$ times lower than that of the ULA configuration while the improvement decays to $2$ times lower at $M=5$ antennas. When the number of antennas increases, the gap between the optimal and ULA CRLB becomes smaller because the Signal-to-Noise Ratio (SNR) is large enough to
make up the poor geometry of antennas.
\begin{figure}
\centering
\includegraphics[width=10cm]{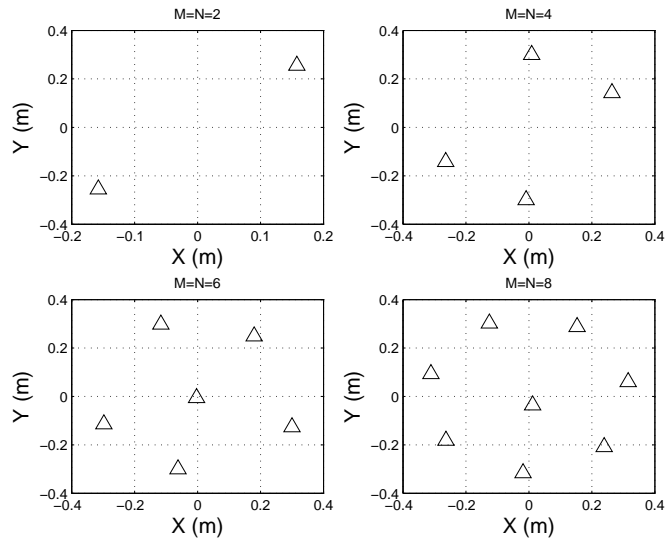}
\caption{The optimal configuration of antennas for a single-target case. The optimal configuration is found for different number of antennas where each antenna can both transmit and receive signals.}
\label{OptimalConfgSingleAntenna}
\end{figure}
\begin{figure}
\centering
\includegraphics[width=8cm]{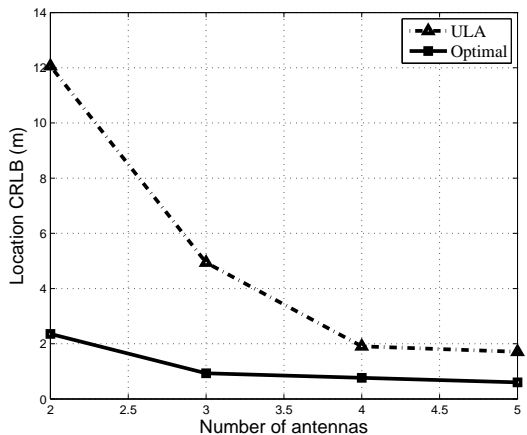}
\caption{Localization CRLB for the ULA configuration and the optimal structure. The CRLB is found for a single-target scenario and different number of antennas.}
\label{CRLBOptimalVsULA}
\end{figure}
\subsubsection{The effect of the target DOA on the optimal structure}
 Consider the above scenario with $M=4$ antennas. While the target is still assumed to fall in the same cell defined in the above experiment, its DOA varies in the interval $[\frac{-\pi}{2},\frac{\pi}{2}]$. The optimization algorithm is now
implemented to find the optimal configuration of antennas. Figure \ref{OptimalConfigDifferentDOAsSingleTarget} shows the results for four different target DOAs. The results shown in Figure \ref{OptimalConfigDifferentDOAsSingleTarget} imply that the optimal configuration with $\theta_1$ as the DOA can be obtained from the optimal structure with $\theta_2$ by rotating the geometry $(\theta_2-\theta_1)$ (rad) around the mass center, which confirms Proposition \ref{RotationLemma}.
\subsubsection{The localization performance of the optimal structure}
Assume $M=3$ for the number of antennas.
Besides the optimal and ULA configurations, a random antenna allocation is also used for the test where the antennas are randomly distributed in the underlying surveillance region. The localization Root Mean Squared Error (RMSE) is now calculated at different target
SNRs where all results are obtained after $100$ Monte Carlo simulations. Figure \ref{RMSEOptimalConfig} presents the resulting RMSE for each of the above configurations. It is observed that the optimal configuration achieves the lowest
RMSE while the ULA provides the worst results. The random allocation also gives an RMSE between the optimal and ULA configurations although other random distributions of antennas may provide higher RMSE results.
\subsubsection{The optimal design for separate transmitter and receiver arrays}
Simulation results are now provided for a scenario in which each antenna can either transmit or receive signals. Consider a single-target scenario with $\theta^c=-\frac{\pi}{3}$ (rad) as the DOA. The optimal structure is now found for two cases with $M=N=2$ and $M=N=6$ antennas. Figure \ref{OptimalConfgSingleAntennaSeparate} presents the obtained optimal structures where, for each case, the results are given for scenarios with the same and separate transmitters and receivers, respectively. It can be observed that the optimal structure obtained for each case (e.g., the same and separate transmitters and receivers) is the same with transmitters and receivers being clustered such that the mutual distances between the same-type antennas (e.g., transmitter or receiver) is minimized. To test this hypothesis, assume that $6$ antennas are available and there are two scenarios with $M=4$ and $M=3$ as the number of transmitters at each scenario. The optimal structure is now found for each scenario and the final results are shown in Figure \ref{OptimalConfgSingleAntennaSeparate2}. It can be observed that the same optimal structure is obtained for both cases with antennas being clustered based on the mutual distances between the antennas with the same type.

Note that although the obtained optimal structures in Figure \ref{OptimalConfgSingleAntennaSeparate2} are similar, the optimum cost function might be different based on the number of signal paths ($M\times N$). For example, for the configurations given in Figure \ref{OptimalConfgSingleAntennaSeparate2}, the optimal cost is calculated to be $0.7545$ and $0.6393$ for $M=4$ ($M\times N=8$) and $M=3$ ($M \times N = 9$) antennas, respectively. The obtained optimum cost values also confirm the fact that the more the diversity gain, the lower the achieved optimum cost.
\begin{figure}
\centering
\includegraphics[width=10cm]{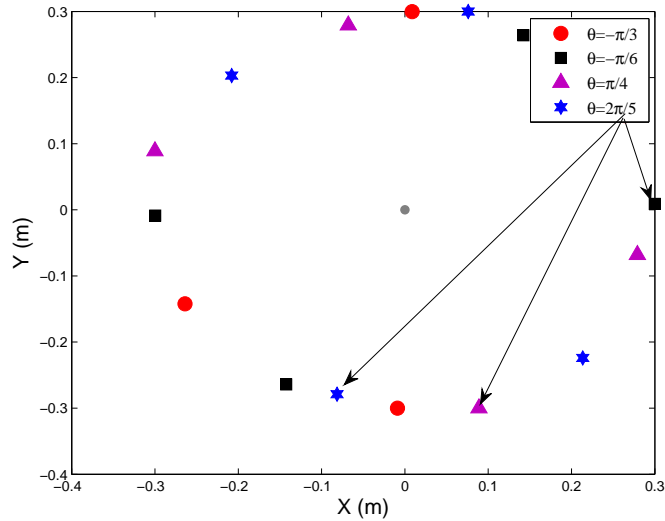}
\caption{The optimal configuration of antennas for a single-target case, and for four different target DOAs.}
\label{OptimalConfigDifferentDOAsSingleTarget}
\end{figure}
\begin{figure}
\centering
\includegraphics[width=10cm]{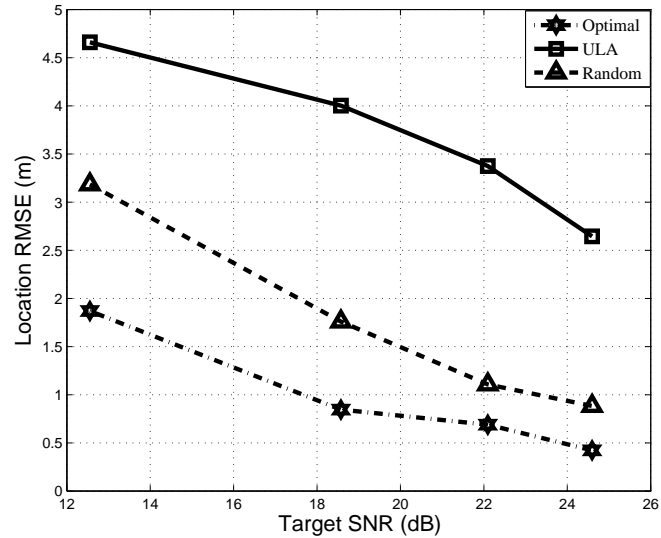}
\caption{Location RMSE for different target SNRs and for the MIMO radar with $M=3$ antennas. The RMSE results are obtained for three different structures (ULA, optimal, and randomly-distributed configurations).}
\label{RMSEOptimalConfig}
\end{figure}
\begin{figure}
\centering
\includegraphics[width=10cm]{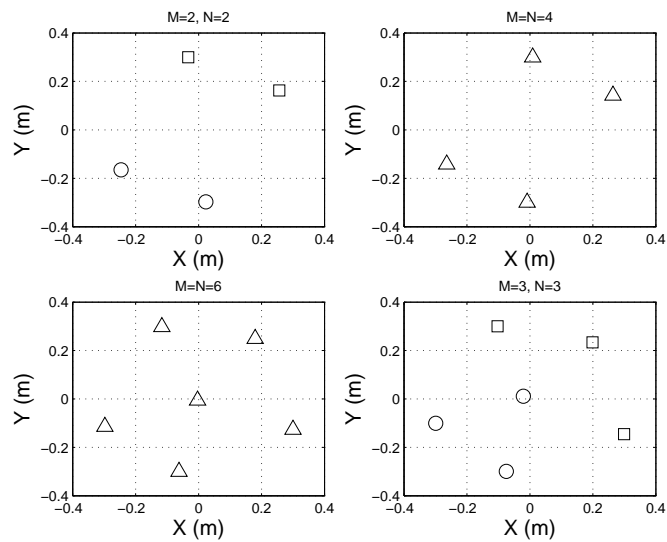}
\caption{The optimal configuration of antennas for a single-target case. The optimal configuration is found for different number of antennas where each antenna can either transmit or receive signals ($\bigcirc$- transmitters, $\Box$- receivers, and $\bigtriangleup$- transceivers).}
\label{OptimalConfgSingleAntennaSeparate}
\end{figure}

\begin{figure}
\centering
\includegraphics[width=10cm]{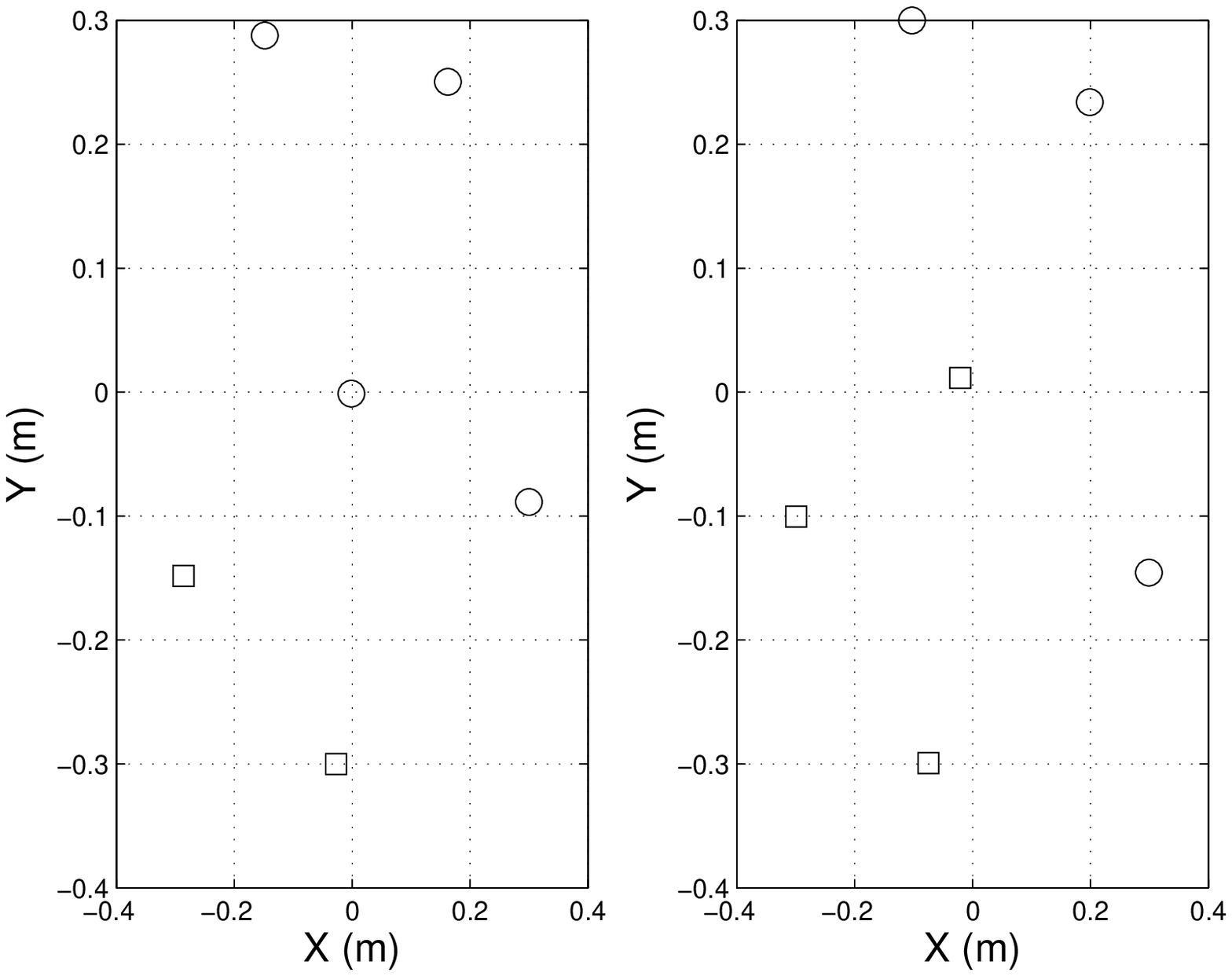}
\caption{The optimal antenna configuration for the single-target scenario with $M+N=6$ antennas. The optimal structure is found for two cases with $M=4$ and $M=3$ antennas as the number of transmitters ($\bigcirc$- transmitters and $\Box$- receivers).}
\label{OptimalConfgSingleAntennaSeparate2}
\end{figure}
\subsection{Multiple-Target Case}
In this subsection, the optimization algorithm is applied to a scenario with more than one target being located in the same resolution cell. Let us assume there are two targets falling in the same cell with the following parameters:
\begin{eqnarray}
\Theta^c_1 &=& \left[-\frac{\pi}{3} \ .33 \ 3 \ 3\right]' \nonumber\\
\Theta^c_2 &=& \left[+\frac{\pi}{3} \ .66 \ 3 \ 3\right]'
\end{eqnarray}

Based on the results in Figure \ref{DeltaOmega}, it is now evident that the effect of sinusoidal terms on the cost function cannot be ignored due to the large value for $\Delta \theta$. First, the optimization framework given by Algorithm \ref{Algorithm1} is applied to the two-target scenario with different initial conditions. Figure \ref{InitialCondition} shows the cost values obtained at different iterations of the algorithm and for different initial conditions. It is observed that the algorithm captures the global minimum after a number of iterations. While each initial condition leads to a different cost value, the sampling approach finally finds the structure corresponding to the global minimum. Note that without the sampling procedure, each initial condition leads to a different optimal cost as shown in Figure \ref{OptimalCostDifferentInitial}.
\begin{figure}
\centering
\includegraphics[width=10cm]{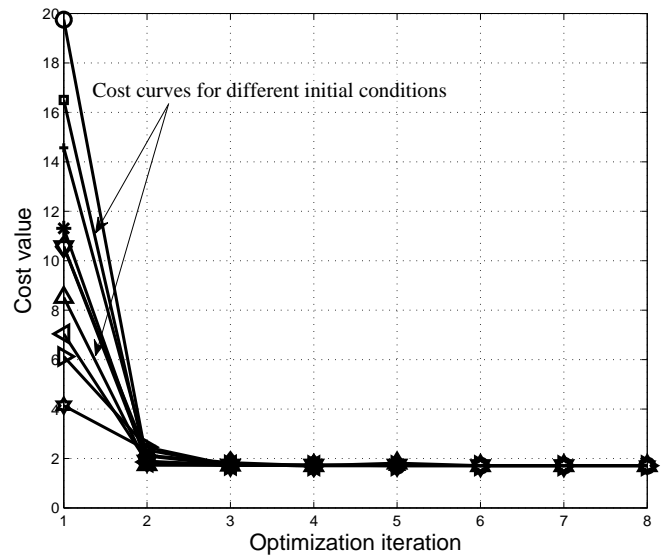}
\caption{The cost for $10$ different initial antenna locations (different symbols correspond to initial conditions). The simulations are done for a two-target scenario the same resolution cell.}
\label{InitialCondition}
\end{figure}
\begin{figure}
\centering
\includegraphics[width=10cm]{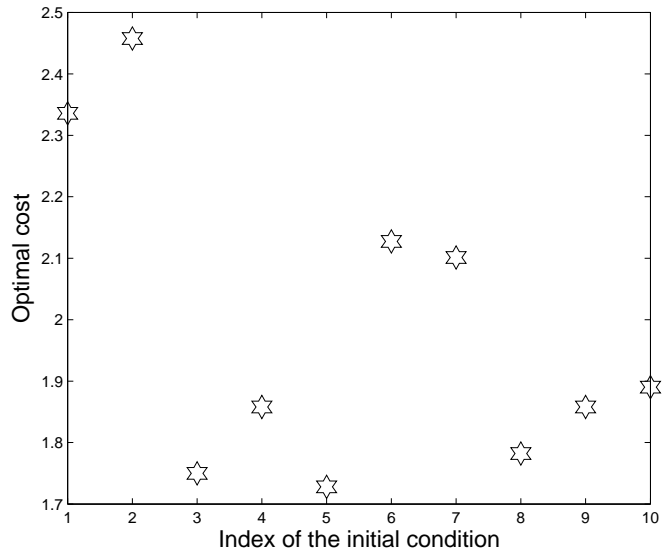}
\caption{The optimal cost for $10$ different initial antenna locations without using the sampling-based approach.}
\label{OptimalCostDifferentInitial}
\end{figure}

\subsubsection{The effect of the angular separation on the optimal configuration}
The optimal structure is found for different values of $\Delta \theta=\theta^c_2-\theta^c_1$. The optimal configurations are now depicted in Figure \ref{OptimalStructureDeltaTheta} for four different values of $\Delta \theta$. It is observed that when $\Delta \theta \rightarrow 0$, the obtained structure resembles the one given in Figure \ref{OptimalConfgSingleAntenna} for the scenario with $M=4$ antennas. Nevertheless, for other values of $\Delta \theta$, a new structure whose geometry depends on the distribution of targets in the resolution cell is obtained. Figure \ref{CRLBDeltaTheta} also presents the cost function (i.e., the trace of the location CRLB) for different values of $\Delta \theta$ where the results are obtained for both optimal and ULA configurations. The graph indicates that the closer the targets the poorer the performance. For example, the cost at $\Delta \theta=\frac{\pi}{100}$ is $100 \%$ higher than the one at $\Delta \theta=\frac{\pi}{50}$ when the optimal structure is taken. In addition, Figure \ref{CRLBDeltaTheta} confirms the superiority of the optimal structure to the ULA. The rate of the improvement also increases when targets become more closer, with $10$ times lower cost at $\Delta \theta=\frac{\pi}{100}$ compared to $5$ times lower cost achieved at $\Delta \theta=2\frac{\pi}{3}$.

The obtained results in Figures \ref{OptimalConfgSingleAntenna} and \ref{CRLBDeltaTheta} imply that although the optimization algorithm can be implemented more efficiently when the angular separation between two targets becomes smaller, the performance of the localization is degraded. In other words, there is a trade-off between the quality of the localization and the efficiency of the optimization algorithm. Smaller values of $\Delta \theta$ makes the FIM entries in (\ref{Simplest}) less dependent on the sinusoid terms.

\subsubsection{The optimal design for multiple unresolved targets}
The performance of the optimization algorithm can be also evaluated for a scenario with more than two targets inside the same resolution cell. It is known that there is a bound on the maximum number of targets that can be uniquely detected in the same resolution cell \cite{Li2007V2}. Assume different number of targets are placed in the $c$-th resolution cell with the same SNR being assigned to each target. Also, consider the MIMO structure with $M=N=4$ antennas where each antenna can both transmit and receive signals. We find the optimal structure for each case with a different number of targets inside the same resolution cell.

For comparison, the localization algorithm is also applied to the obtained structures and the location RMSE is calculated by taking an average of individual estimates in $100$ Monte Carlo runs. The RMSE results as well as the location CRLB are now depicted in Figure \ref{RMSECRLBULAOptimal} where the graphs for the case with the ULA MIMO structure are also included. While the localization performance degrades, when the number of targets increases, the optimal structure always shows the lower RMSE compared to the ULA configuration. In addition, when more targets fall inside the same resolution cell, the difference between the obtained RMSE of the ULA structure and that of the optimal configuration becomes higher. For example, for the scenario with $T=2$ unresolved targets, the optimal RMSE is $53 \%$ lower than the RMSE obtained by the ULA structure. Nevertheless, the gap widens to $123 \%$ when $5$ targets occupy the same resolution cell. Although the distribution of targets in the cell also affects the localization performance \cite{Gorji2011}, this experiment shows the superiority of the optimal structure compared to the ULA configuration, specially, when more targets are placed in the same resolution cell.
\section{Conclusions}
This paper considered the antenna allocation problem in a collocated MIMO radar system. A novel derivation of the CRLB was presented where both range and DOA information were included in the CRLB. An SDP problem was then formulated for antenna allocation when a single target is located inside the resolution cell. Then, the antenna allocation was extended to the multiple unresolved target scenarios, and it was shown that the final cost function is non-convex. A sampling-based approach was proposed to capture the global minimum of the proposed cost function. Simulation results were also presented for both scenarios with the single-target and multiple targets occupying the same resolution cell. The obtained results confirmed the superiority of the optimal configuration compared to the common ULA structure in both single and multiple target scenarios.
\begin{figure}
\centering
\includegraphics[width=10cm]{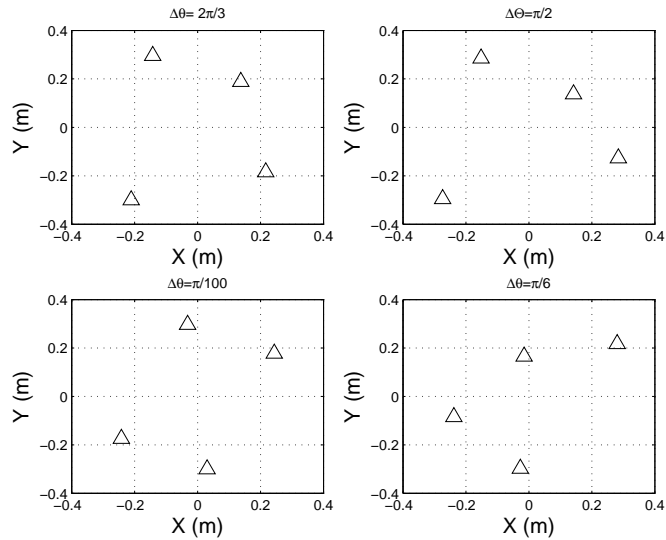}
\caption{The optimal antenna configuration for the two-target scenario. The optimal structure is found for different values of $\Delta \theta$.}
\label{OptimalStructureDeltaTheta}
\end{figure}

\begin{figure}
\centering
\includegraphics[width=10cm]{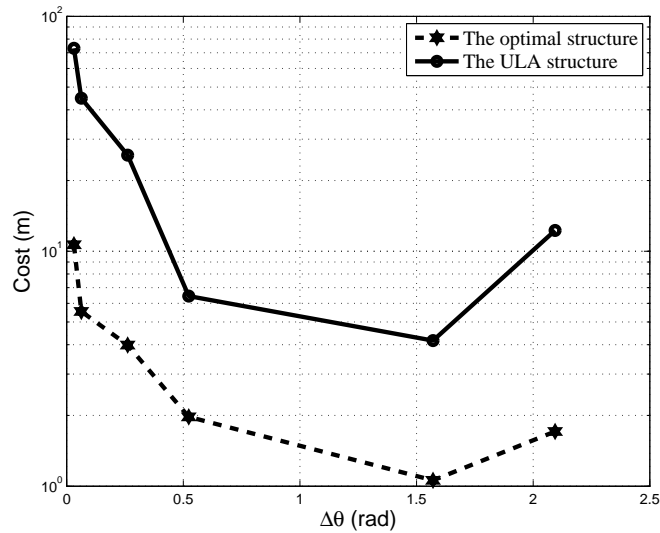}
\caption{The calculated cost for the two-target scenario. The cost was calculated for different values of $\Delta \theta$ and scenarios with the optimal and ULA structure.}
\label{CRLBDeltaTheta}
\end{figure}

\begin{figure}
\centering
\includegraphics[width=10cm]{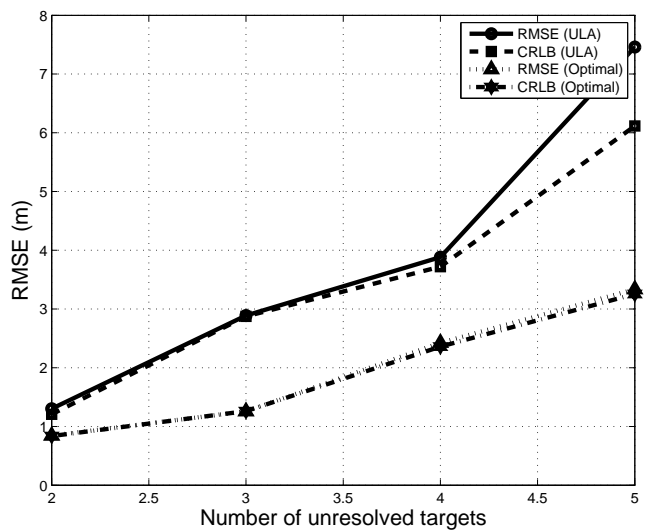}
\caption{Location RMSE and CRLB for both optimal and ULA structures. The results are obtained for scenarios with different number of targets being located inside the same resolution cell.}
\label{RMSECRLBULAOptimal}
\end{figure}
\renewcommand{\theequation}{\thesection.\arabic{equation}}
\appendices
\section{Proof of Proposition \ref{CRLBProp}}\label{ProofCRLB}
Consider the $n$-th target and the $m$-th target that are located in the $c_1$-th cell and the $c_2$-th cell, respectively. First, assume that $c_1=c_2-1$. The unknown term $J_{\theta^{c_1}_n\theta^{c_2}_m}$ is first calculated where the proof for other terms is similar. The entry $J_{\theta^{c_1}_n\theta^{c_2}_m}$ can be found using the following equality:
\begin{equation}\label{OriginalFIM}
J_{\theta^{c_1}_n\theta^{c_2}_m}=\left(\frac{\partial \bar{\rho}_*}{\partial \theta^{c_1}_n}\right)'\Sigma^{-1}_*\left(\frac{\partial \bar{\rho}_*}{\partial \theta^{c_2}_m}\right)+\mathcal{T}\left(
\left(\frac{\partial \Sigma_*}{\partial \theta^{c_1}_n}\right)\Sigma^{-1}_*\left(\frac{\partial \Sigma_*}{\partial \theta^{c_2}_m}\right)\Sigma^{-1}_*
\right)
\end{equation}
Based on the definition of the covariance matrix in (\ref{Covariance}), it is evident that the second term on the right-hand side of the above equation is zero. Now, according to the definition of $\bar{\rho}_*$ in (\ref{rostar}), the following equations can be derived for the partial derivative terms in (\ref{OriginalFIM}):
\begin{eqnarray}
\frac{\partial \bar{\rho}_*}{\partial \theta^{c_1}_n} &=& \left[\frac{\partial \Re(\bar{\eta}_{c_1-1})}{\partial \theta^{c_1}_n} \ \frac{\partial \Im(\bar{\eta}_{c_1-1})}{\partial \theta^{c_1}_n} \ \frac{\partial \Re(\bar{\eta}_{c_1})}{\partial \theta^{c_1}_n} \ \frac{\partial \Im(\bar{\eta}_{c_1})}{\partial \theta^{c_1}_n} \ 0\right]\\
\frac{\partial \bar{\rho}_*}{\partial \theta^{c_2}_m} &=& \left[0 \ \frac{\partial \Re(\bar{\eta}_{c_1})}{\partial \theta^{c_2}_m} \ \frac{\partial \Im(\bar{\eta}_{c_1})}{\partial \theta^{c_2}_m} \ \frac{\partial \Re(\bar{\eta}_{c_2})}{\partial \theta^{c_2}_m} \ \frac{\partial \Im(\bar{\eta}_{c_2})}{\partial \theta^{c_2}_m}\right]
\end{eqnarray}
After some algebraic operations, the FIM in (\ref{OriginalFIM}) can be written in the following form:
\begin{eqnarray}\label{Expansion}
J_{\theta^{c_1}_n\theta^{c_2}_m} &=& k_2\left\{\left(\frac{\partial \Re(\bar{\eta}_{c_1})}{\partial \theta^{c_1}_n}\right)\left(\frac{\partial \Re(\bar{\eta}_{c_1})}{\partial \theta^{c_2}_m}\right)'+ \left(\frac{\partial \Im(\bar{\eta}_{c_1})}{\partial \theta^{c_1}_n}\right)\left(\frac{\partial \Im(\bar{\eta}_{c_1})}{\partial \theta^{c_2}_m}\right)'\right\}\nonumber\\
&+& k_4\left\{\left(\frac{\partial \Re(\bar{\eta}_{c_1-1})}{\partial \theta^{c_1}_n}\right)\left(\frac{\partial \Re(\bar{\eta}_{c_1})}{\partial \theta^{c_2}_m}\right)'+ \left(\frac{\partial \Im(\bar{\eta}_{c_1-1})}{\partial \theta^{c_1}_n}\right)\left(\frac{\partial \Im(\bar{\eta}_{c_1})}{\partial \theta^{c_2}_m}\right)\right\}\nonumber \\
&+& k_5\left\{\left(\frac{\partial \Re(\bar{\eta}_{c_1-1})}{\partial \theta^{c_1}_n}\right)\left(\frac{\partial \Re(\bar{\eta}_{c_2})}{\partial \theta^{c_2}_m}\right)'+ \left(\frac{\partial \Im(\bar{\eta}_{c_1-1})}{\partial \theta^{c_1}_n}\right)\left(\frac{\partial \Im(\bar{\eta}_{c_2})}{\partial \theta^{c_2}_m}\right)'\right\} \nonumber\\
&+& k_6\left\{\left(\frac{\partial \Re(\bar{\eta}_{c_1})}{\partial \theta^{c_1}_n}\right)\left(\frac{\partial \Re(\bar{\eta}_{c_2})}{\partial \theta^{c_2}_m}\right)'+ \left(\frac{\partial \Im(\bar{\eta}_{c_1})}{\partial \theta^{c_1}_n}\right)\left(\frac{\partial \Im(\bar{\eta}_{c_2})}{\partial \theta^{c_2}_m}\right)'\right\}
\end{eqnarray}
Using (\ref{Average}) and (\ref{Psi}) and the notations given by definitions \ref{Notation1} and \ref{Notation2}, the following expressions can be derived for the derivatives in (\ref{Expansion}):
\begin{eqnarray}
\left(\frac{\partial \Re(\bar{\eta}_{c_1})}{\partial \theta^{c_1}_n}\right)\left(\frac{\partial \Re(\bar{\eta}_{c_1})}{\partial \theta^{c_2}_m}\right)' = K\left(\frac{2\pi}{\lambda}\right)^2\beta^{c_1}_n(1-\beta^{c_2}_m)\times \nonumber\\
\left\{\sum_{l=1}^{nm}(\textbf{p}^{c_1}_n)'\Omega(:,l)\Omega'(:,l)\textbf{p}^{c_2}_m
\left(\bar{\xi}^{c_1}_n\cos(\omega^{c_1}_n(l))-\bar{\zeta}^{c_1}_n\sin(\omega^{c_1}_n(l))\right)\left(\bar{\xi}^{c_2}_m\cos(\omega^{c_2}_m(l))-\bar{\zeta}^{c_2}_m\sin(\omega^{c_2}_m(l))\right)\right\} \nonumber \\
\left(\frac{\partial \Im(\bar{\eta}_{c_1})}{\partial \theta^{c_1}_n}\right)\left(\frac{\partial \Im(\bar{\eta}_{c_1})}{\partial \theta^{c_2}_m}\right)'= K\left(\frac{2\pi}{\lambda}\right)^2\beta^{c_1}_n(1-\beta^{c_2}_m)\times \nonumber\\
\left\{\sum_{l=1}^{nm}(\textbf{p}^{c_1}_n)'\Omega(:,l)\Omega'(:,l)\textbf{p}^{c_2}_m \left(\bar{\xi}^{c_1}_n\sin(\omega^{c_1}_n(l))+\bar{\zeta}^{c_1}_n\cos(\omega^{c_1}_n(l))\right)\left(\bar{\xi}^{c_2}_m\sin(\omega^{c_2}_m(l))+\bar{\zeta}^{c_2}_m\cos(\omega^{c_2}_m(l))\right)\right\}
\end{eqnarray}
Therefore, the first term on the right-hand side of (\ref{Expansion}) can be written as
\begin{eqnarray}
k_2\left\{\left(\frac{\partial \Re(\bar{\eta}_{c_1})}{\partial \theta^{c_1}_n}\right)\left(\frac{\partial \Re(\bar{\eta}_{c_1})}{\partial \theta^{c_2}_m}\right)'+ \left(\frac{\partial \Im(\bar{\eta}_{c_1})}{\partial \theta^{c_1}_n}\right)\left(\frac{\partial \Im(\bar{\eta}_{c_1})}{\partial \theta^{c_2}_m}\right)'\right\} &=& \nonumber\\
K\left(\frac{2\pi}{\lambda}\right)^2 \beta^{c_1}_n(1-\beta^{c_2}_m)\left\{\sum_{l=1}^{nm} (\textbf{p}^{c_1}_n)'\Omega(:,l)\Omega'(:,l)\textbf{p}^{c_2}_m \left(\kappa^{nm}_{c_1c_2}\cos(\omega^{c_2}_m(l)-\omega^{c_1}_n(l))+\iota^{nm}_{c_1c_2}\sin(\omega^{c_2}_m(l)-\omega^{c_1}_n(l))\right)\right\}
\end{eqnarray}
Other terms on the right-hand side of (\ref{Expansion}) can be similarly found. The final form can be now written as follows:
\begin{eqnarray}
J_{\theta^{c_1}_n\theta^{c_2}_m} = K\left(\frac{2\pi}{\lambda}\right)^2\left\{\sum_{l=1}^{nm} (\textbf{p}^{c_1}_n)'\Omega(:,l)\Omega'(:,l)\textbf{p}^{c_2}_m \left(\kappa^{nm}_{c_1c_2}\cos(\omega^{c_2}_m(l)-\omega^{c_1}_n(l))+\iota^{nm}_{c_1c_2}\sin(\omega^{c_2}_m(l)-\omega^{c_1}_n(l))\right)\right\} \times \nonumber \\
\left(k_2 \beta^{c_1}_n(1-\beta^{c_2}_m)+k_4 (1-\beta^{c_1}_n)(1-\beta^{c_2}_m)+k_5 \beta^{c_1}_n\beta^{c_2}_m+k_6 \beta^{c_1}_n(1-\beta^{c_2}_m) \right)
\end{eqnarray}
where the term in the second line of the above equation can be written as follows:
\begin{eqnarray}
\left(k_2 \beta^{c_1}_n(1-\beta^{c_2}_m)+k_4 (1-\beta^{c_1}_n)(1-\beta^{c_2}_m)+k_5 \beta^{c_1}_n\beta^{c_2}_m+k_6 \beta^{c_1}_n(1-\beta^{c_2}_m) \right) =
\left[(\breve{\beta}^{c_1}_n)' \ 0\right]\Sigma^{-1}_*\left[0 \ (\breve{\beta}^{c_2}_m)'\right]'
\end{eqnarray}
which is the coefficient $C_{\theta^{c_1}_n\theta^{c_2}_m}$. For the case with $c_1=c_2$, the same procedure can be followed and the expression in the proposition is similarly found.
\section{Proof of Theorem \ref{Theorem1}}\label{ProofTheorem1}
We begin with the optimization formulation given by (\ref{Optimization3}). Define the new matrix $T_{nm}$ and  the new variable $t_{nm}$ with $\{m=1,\cdots,M\}, \{n=1,\cdots,N\}$, and rewrite the optimization problem as follows:
\begin{equation}\label{Optimization4}
\begin{array}{cc}
\max_{\left\{\Delta \textbf{s}_{11},\cdots,\Delta \textbf{s}_{nm},\textbf{t},T^*\right\}} & \sum_{m=1}^M\sum_{n=1}^N t_{nm}\\
\mbox{S.T} & ||\Delta \textbf{s}_{nm}||_2 \geq d_{nm}\\
 & ||\Delta \textbf{s}_{nm}||_2 \leq e_{nm}\\
 & \sum_{m=1}^M \textbf{s}_{tm}+\sum_{n=1}^N \textbf{s}_{rn}=0\\
 & (\textbf{p}^c)'T_{nm}\textbf{p}^c \geq t_{nm}\\
 & (\Delta \textbf{s}_{nm})(\Delta \textbf{s}_{nm})' \succeq T_{nm}, \forall \ m=\{1,\cdots,M\}, n=\{1,\cdots,N\}
\end{array}
\end{equation}
where $\textbf{t}=\left[t_{11} \ \cdots \ t_{nm}\right]'$, and $T^*=\{T_{11},\cdots,T_{nm}\}$. The second-norm terms in the constraints can be written into the following form:
\begin{eqnarray}\label{Ineq1}
\left[\begin{array}{cc}
-I_{2\times2} & \Delta \textbf{s}_{nm}\\
\Delta \textbf{s}'_{nm} & -e^{2}_{nm}
\end{array}
\right] \preceq 0, \ \left[\begin{array}{cc}
I_{2\times2} & \Delta \textbf{s}_{nm}\\
\Delta \textbf{s}'_{nm} & d^{2}_{nm}
\end{array}
\right] \preceq 0
\end{eqnarray}
In addition, using the Schur-complement of a square matrix, the last constraint in (\ref{Optimization4}) is written as
\begin{equation}\label{Ineq2}
\left[\begin{array}{cc}
1 & \Delta \textbf{s}'_{nm}\\
\Delta \textbf{s}_{nm} & T_{nm}
\end{array}
\right] \preceq 0
\end{equation}
Inserting the new forms provided by (\ref{Ineq1}) and (\ref{Ineq2}) in (\ref{Optimization4}) and using the fact that $(\textbf{p}^c)'T_{nm}\textbf{p}^c=\mathcal{T}(T_{nm}P)$ with $P=\textbf{p}^c(\textbf{p}^c)'$, the following new form is derived for the optimization problem:
\begin{equation}\label{Optimization5}
\begin{array}{cc}
\max_{\left\{T^*,S^*,\textbf{t}\right\}} & \sum_{m=1}^M\sum_{n=1}^N t_{nm}\\
\mbox{S.T} & \sum_{m=1}^M \textbf{s}_{tm}+\sum_{n=1}^N \textbf{s}_{rn}=0\\
 & \mathcal{T}(T_{nm}P) \geq t_{nm}\\
 & \left[\begin{array}{cc}
-I_{2\times2} & \Delta \textbf{s}_{nm}\\
\Delta \textbf{s}'_{nm} & -e^{2}_{nm}
\end{array}
\right] \preceq 0, \ \left[\begin{array}{cc}
I_{2\times2} & \Delta \textbf{s}_{nm}\\
\Delta \textbf{s}'_{nm} & d^{2}_{nm}
\end{array}
\right] \preceq 0\\
 & \nonumber\\
 & \left[\begin{array}{cc}
1 & \Delta \textbf{s}'_{nm}\\
\Delta \textbf{s}_{nm} & T_{nm}
\end{array}
\right] \preceq 0, \forall \ m=\{1,\cdots,M\}, n=\{1,\cdots,N\}
\end{array}
\end{equation}
where $S^*=\{\textbf{s}_{t1},\cdots,\textbf{s}_{rN}\}$. Now, the difference vector is written as
\begin{equation}
\Delta \textbf{s}_{nm}=\textbf{s}_{rn}-\textbf{s}_{tm}
\end{equation}
Replacing the above equation in (\ref{Optimization5}), the form given by the theorem is obtained.
\section{Proof of Lemma \ref{RotationLemma}}\label{Rotation}
Consider the optimal structure found for the case with $\theta_1$. Assuming $\theta_2$ as the new DOA the cost function in (\ref{Optimization3}) can be rewritten as follows:
\begin{equation}
\sum_{m=1}^M\sum_{n=1}^N \Delta \textbf{s}_{nm}'P^* \Delta \textbf{s}_{nm}
\end{equation}
with $P^*=(\textbf{p}^*)'\textbf{p}^*$ and $\textbf{p}^*=\left[\cos(\theta_1+\Delta \theta) \ -\sin(\theta_1+\Delta \theta)\right]'$. The vector $\textbf{p}^*$ can be expanded as
\begin{equation}
\textbf{p}^* = \left[\begin{array}{cc}
\cos(\Delta \theta) & \sin(\Delta \theta)\\
-\sin(\Delta \theta) & \cos(\Delta \theta)
\end{array}
\right]\left[\begin{array}{c}
\cos(\theta_1)\\
-\sin(\theta_1)
\end{array}
\right]
\end{equation}
Defining the first term on the right-hand side of the above equation as $G_{\Delta \theta}$, the cost function can be rewritten as follows:
\begin{equation}
\sum_{m=1}^M\sum_{n=1}^N \Delta \textbf{s}'_{nm}G_{\Delta \theta}PG'_{\Delta \theta} \Delta \textbf{s}_{nm}
\end{equation}
We know that $\Delta \textbf{s}^{o1}_{nm}=\left(\textbf{s}^{o1}_{tm}-\textbf{s}^{o1}_{rn}\right)$ maximizes the cost function in (\ref{Optimization3}) where $P$ is the matrix corresponding the target with $\theta_1$ as the DOA. Therefore, an optimal solution of the optimization problem with $\theta_2$ as the DOA of the target can be obtained as
\begin{eqnarray}
G'_{\Delta \theta}\textbf{s}^{o2}_{tm} &=& \textbf{s}^{o1}_{tm}\\
G'_{\Delta \theta}\textbf{s}^{o2}_{rn} &=& \textbf{s}^{o1}_{rn}, \ \forall \ m=\{1,\cdots,M\}, n=\{1,\cdots,N\}
\end{eqnarray}
Consequently, the new optimal solution is written as follows:
\begin{eqnarray}
\textbf{s}^{o2}_{tm} &=& G_{\Delta \theta}\textbf{s}^{o1}_{tm}\\
\textbf{s}^{o2}_{rn} &=& G_{\Delta \theta}\textbf{s}^{o1}_{rn}, \ \forall \ m=\{1,\cdots,M\}, n=\{1,\cdots,N\}
\end{eqnarray}
Now, we have to check whether the new solution holds in the constraints. It can be shown that:
\begin{equation}
||\Delta \textbf{s}^{o2}_{nm}||_2=||\Delta \textbf{s}^{o1}_{nm}||_2
\end{equation}
In addition, it is known that:
\begin{equation}
\sum_{m=1}^M \textbf{s}^{o2}_{tm}+\sum_{n=1}^N \textbf{s}^{o2}_{rn} = G_{\Delta \theta}\left(\sum_{m=1}^M \textbf{s}^{o1}_{tm}+\sum_{n=1}^N \right) = 0
\end{equation}
which implies that the new optimal solution also meets the constraints.
\section{Proof of Proposition \ref{Uniqueness}}\label{UniquenessProof}
Consider the optimization problem in (\ref{Optimization3}) without the constraint on the mass center. In this case, the cost function is quadratic with respect to the unknown difference vectors. The unique optimal solution obtained by solving the resulting optimization problem can be written as $\{\Delta \textbf{s}^o_{nm}\}$ with $m=\{1,\cdots,M\}$ and $n=\{1,\cdots,N\}$. It is evident that there are an infinite number of location solutions for which the above set of difference vectors are obtained. Let us define the $i$-th and the $j$-th sets as $\{S^{oi}_t,S^{oi}_r\}$ and $\{S^{oj}_t,S^{oj}_r\}$, respectively. It is known from the geometry that:
\begin{equation}\label{Transform}
\textbf{s}^{oi}_{tl}=G_{\theta}\textbf{s}^{oj}_{tl}+\textbf{b}_{tl}
\end{equation}
where $G_{\theta}$ denotes a rotation matrix with $\theta$ as the angle of the rotation, and $\textbf{b}$ refers to an arbitrary translation. Note that the above equation can be  written for every other antenna in the array of receivers as well. Considering the mass center constraint given by (\ref{CenterMass}), we show that the translation should be zero in (\ref{Transform}). To show this, we first assume that there is a nonzero translation as $\textbf{b}_{tl}$. Then, it is observed that such an assumption leads to the contradiction. From the assumption \ref{CenterMass}, it is known that the center of the mass of the array is located in the origin. Therefore, there should be another translation $\textbf{b}_{tv}$ where $\textbf{b}_{tv}=-\textbf{b}_{tl}$. Under the new translations, the new difference vector is written as
\begin{equation}
(\textbf{s}^{oi}_{tl}-\textbf{s}^{oi}_{tv}) = (\textbf{s}^{oj}_{tl}-\textbf{s}^{oj}_{tv}+2\textbf{b}_{tl})
\end{equation}
It is now evident that the new configuration gives a different set of difference vectors, which is a contradiction to our initial assumption (e.g., the same set of difference vectors). Therefore, the translation part in (\ref{Transform}) is zero. Now, consider the rotation part in (\ref{Transform}). It is known that the rotation transform does not change the distance between each two points. Rewrite the cost function in (\ref{Optimization3}) into the following form:
\begin{equation}
\sum_{m=1}^{N}\sum_{n=1}^N (\textbf{s}^{oi}_{nm})'P^c \Delta \textbf{s}^{oi}_{nm}= \sum_{m=1}^{N}\sum_{n=1}^N (\textbf{s}^{oj}_{nm})'G'_{\theta}P^c G_{\theta} \Delta \textbf{s}^{oj}_{nm}
\end{equation}
Using formal matrix operations, the new matrix $U=G'_{\theta}P^c G_{\theta}$ can be written in the following form:
\begin{equation}\label{U}
U=\left[\begin{array}{cc}
\cos^2(\theta+\theta^c) & -\sin(\theta+\theta^c)\cos(\theta+\theta^c)\\
-\sin(\theta+\theta^c)\cos(\theta+\theta^c) & \sin^2(\theta+\theta^c)
\end{array}
\right]
\end{equation}
The cost produced by each of two sets of optimal solutions is equal if the following condition is held:
\begin{equation}\label{ZeroEquality}
\sum_{m=1}^M\sum_{n=1}^N (\Delta \textbf{s}^{oi}_{nm})'V\Delta \textbf{s}^{oi}_{nm}=0
\end{equation}
with $V=U-P^c$. The equality in (\ref{ZeroEquality}) is valid if either $V=0$ or $V$ is neither positive nor negative semi-definite. First, assume $V=0$. Based on the given form in (\ref{U}) for the matrix $U$, it can be inferred that $U=P^c$ when $\theta=n \pi$. In other words, a rotation with $n \pi$ as the angle of rotation provides the same cost function. Now, assume the other case where $V \neq 0$. It can be shown that matrix $V$ has two eigenvalues $\{\lambda,-\lambda\}$ where the value of $\lambda$ depends on the rotation angle and $\theta^c$. Therefore, the zero inequality in (\ref{ZeroEquality}) leads to a number of solutions for the difference vectors. The rotated configuration can be then another solution of the optimization problem if $\{S^{oi}_t,S^{oi}_r\}$ belongs to the set of solutions of (\ref{ZeroEquality}). The above discussions state that the optimization problem provides at least two solutions for the optimum configuration of antennas.
\end{document}